\newtheorem{definition}{Definition}
\newtheorem{theorem}{Theorem}
\newtheorem{lemma}{Lemma}
\newenvironment{proof}{\noindent \begin{rm}{\textbf{Proof.} }}{\hspace*{\fill}$\Box$\par\end{rm} \vspace{.3cm}}
\newcommand{\id}{\mbox{ID}}
\newcommand{\InfoMsg}{\mbox{\tt InfoMsg}}
\newcommand{\parent}{\mbox{\sf parent}}
\newcommand{\level}{\mbox{\sf level}}
\newcommand{\dist}{\mbox{\sf dist}}
\newcommand{\need}{\mbox{\sf need}}
\newcommand{\connect}{\mbox{\sf connected}}
\newcommand{\member}{\mbox{\sf member}}
\newcommand{\connectpt}{\mbox{\sf connect\_pt}}
\newcommand{\IsRoot}{\mbox{\sf Is\_Root}}
\newcommand{\CRoot}{\mbox{\sf CRoot}}
\newcommand{\CParent}{\mbox{\sf CParent}}
\newcommand{\distConnect}{\mbox{\sf distConnect}}
\newcommand{\distNotConnect}{\mbox{\sf distNotConnect}}
\newcommand{\parentConnect}{\mbox{\sf parentConnect}}
\newcommand{\parentNotConnect}{\mbox{\sf parentNotConnect}}
\newcommand{\Better}{\mbox{\sf Better\_Path}}
\newcommand{\Ask}{\mbox{\sf Asked\_Connection}}
\newcommand{\ConnectS}{\mbox{\sf Connect\_Stab}}
\newcommand{\ConnectPtS}{\mbox{\sf Connect\_Pt\_Stab}}
\newcommand{\ARA}{$\mathcal{DR}_1$}
\newcommand{\ARB}{$\mathcal{DR}_2$}
\newcommand{\CRA}{$\mathcal{RR}$}
\newcommand{\CRB}{$\mathcal{NR}_1$}
\newcommand{\CRC}{$\mathcal{NR}_2$}
\newcommand{\CRD}{$\mathcal{CR}_1$}
\newcommand{\CRE}{$\mathcal{CR}_2$}
\newcommand{\CRF}{$\mathcal{TR}$}
\newcommand{\CRG}{$\mathcal{CR}_3$}
\newcommand{\STT}{{\sc s3t}}
\newenvironment{lemma-repeat}[1]{\begin{trivlist}
\item[\hspace{\labelsep}{\bf\noindent Lemma~\ref{#1} }]}%
{\end{trivlist}}
\newcommand{\toto}{xxx}
\begin{document}
\title{A Superstabilizing $\log(n)$-Approximation Algorithm for Dynamic Steiner Trees}

\author{
L\'{e}lia Blin$^{1,2}$ 
\and
Maria Gradinariu Potop-Butucaru$^{2,3}$
\and
St\'{e}phane Rovedakis$^1$
}

\footnotetext[1]{Universit\'e d'Evry, IBISC, CNRS, France.}
\footnotetext[2]{Univ. Pierre \& Marie Curie - Paris 6, LIP6-CNRS UMR 7606, France.}
\footnotetext[3]{INRIA REGAL, France.}

\date{}

\maketitle

\begin{abstract}
In this paper we design and prove correct a fully dynamic distributed algorithm for maintaining an approximate Steiner 
 tree that connects via a minimum-weight spanning tree a 
subset of nodes of a network (referred as Steiner members or Steiner group) .
Steiner trees are good 
candidates to efficiently implement communication primitives such
as publish/subscribe or multicast, essential building blocks for the new
emergent networks (e.g. P2P, sensor or adhoc networks). 

The cost
of the solution returned by our algorithm is at most $\log |S|$
times the cost of an optimal solution, where $S$ is the group of
members. Our algorithm improves over existing solutions in several
ways. First, it tolerates the dynamism of both the group members and
the network. Next, our algorithm is self-stabilizing, that is, it
copes with nodes memory
corruption. Last but not least, our algorithm is
\emph{superstabilizing}. 
That is, while converging to a correct
configuration (i.e., a Steiner tree) after a modification of the
network, it keeps offering the Steiner tree service during the
stabilization time to all members that have not been affected by
this modification.
\end{abstract}


\section{Introduction}

The design of efficient distributed applications in the newly distributed 
emergent networks such as MANETs, P2P or sensor networks raises 
various challenges ranging 
from models to fundamental services. These networks face frequent
churn (nodes and links creation or destruction) and various privacy
and security attacks that cannot be easily encapsulated in the
existing distributed models. 
Therefore, new models and new algorithms have to be designed. 

Communication services are the building blocks for any distributed
system and they have received a particular attention in the lately
years. Their efficiency greatly depends 
on the performances of the underlying routing overlay. 
These overlays should be optimized to reduce the network
overload. Moreover, in order 
to avoid security and privacy attacks the number of network nodes that
are used only for the overlay connectivity have to be minimized. Additionally, the overlays have to offer 
some quality of services while nodes or links fail.
  
The work in designing optimized communication overlays for the new emergent
networks has been conducted in both structured (DHT-based) and un-structured networks. 
Communication primitives using DHT-based schemes such as Pastry, CAN or Chord  \cite{CDCHR03} 
build upon a global naming scheme based on hashing nodes identifiers.
These schemes are optimized to efficiently route in the virtual name
space however they have weak energy performances in MANETs or
sensor networks where the maintenance of long links reduces the
network perennial. Therefore, alternative strategies \cite{KS07}, mostly based on
gossip techniques, 
have been recently considered. These schemes, highly 
efficient when nodes have no information on the content and the topology of the system,
offer only probabilistic guarantees on the message delivery.

In this paper we are interested in the study of overlays targeted 
to efficiently connect a group of nodes that are 
not necessarily located in the same geographical area (e.g. sensors that should communicate their 
sensed data to servers located outside the deployment area, P2P nodes that share the same 
interest and are located in different countries, robots that should 
participate to the same task but need to remotely coordinate). Steiner trees are good 
candidates to implement the above mentioned requirements since the problem have been 
designed for efficiently connect a subset of the network nodes, referred as Steiner members.

\paragraph{The Steiner tree problem.}

The Steiner tree problem can be informally expressed as follows: 
given a weighted graph 
in which a subset $S$ 
of nodes 
is identified, 
find a minimum-weight tree spanning $S$.
The Steiner tree problem is one of the most important 
combinatorial optimization problems and finding a Steiner tree is NP-hard. 



A survey on different heuristics for constructing Steiner trees with different competitiveness levels can be found
in \cite{thesis96,Steiner_survey}.
In our work we are interested in dynamic variants of Steiner trees
first addressed in \cite{ImaseWaxman91} in a centralized online setting.
They propose a $\log |S|$-approximation algorithm for this problem that copes
only with Steiner member arrivals. This algorithm can be 
implemented in a decentralized environment (see~\cite{GataniRG05}).

Our work considers the fully dynamic version
of the problem where both Steiner members and ordinary nodes can join
or leave the system.
Additionally, our work aims at providing a 
superstabilizing approximation of a Steiner tree. 
The property of self-stabilization  \cite{Dij74,Dol00} enables a distributed
algorithm to recover from a transient fault regardless of its initial
state. The superstabilization \cite{Dolev_SuperStab} is an extension of the
self-stabilization property for dynamic settings. The idea is to
provide some minimal guarantees while the system repairs after a 
topology change.

To our knowledge there are only two self-stabilizing approximations of
Steiner trees
\cite{SS_Steiner02,Kakugawa_Steiner_journal}. Both works assume the 
shared memory model and an unfair centralized scheduler.
In \cite{SS_Steiner02} the authors propose a self-stabilizing
algorithm based on a pruned minimum spanning 
tree. 
The computed solution has a competitiveness of
$|V|-|S|+1$ where $V$ is the set of nodes in the network. 
In \cite{Kakugawa_Steiner_journal}, the authors proposed a
four-layered algorithm that builds upon the techniques 
proposed in \cite{WuWW86} in order to obtain a $2$ approximation. 


The above cited algorithms work only for static networks. 

\paragraph{Our results.}

We describe a super-stabilizing algorithm for the Steiner tree problem. This algorithm has the following properties:
\begin{itemize}
\item First, it is distributed, i.e.,  completely decentralized. That is, nodes
locally self-organize in a Steiner tree. The cost of the constructed Steiner tree 
is at most $\log |S|$ times the cost of an optimal solution, where $S$ is the Steiner group. 
\item Second, our algorithm is specially designed to cope with user
dynamism. In other words, 
our solution withstand when nodes (or links) join and leave the system. 
\item Third, our algorithm includes self-stabilization policies. Starting from an arbitrary state (nodes local memory corruption, counter program corruption, or
erroneous messages in the network buffers), our algorithm is guaranteed to converge to a tree spanning the Steiner members. 
\item Fourth, our algorithm is \emph{superstabilizing}. That is, while
a topology change occurs, i.e., during the restabilization period,
 the algorithm offers the guarantee that only the subtree connected
through the crashed node/edge is reconstructed. 
\end{itemize}

\begin{table*}[!htb]
\begin{center}
\scalebox{1}
{
\begin{tabular}{|l|c|c|c|}
\hline
 & Approximation ratio & Self-Stabilizing & Superstabilizing \\
\hline
Chen et al. \cite{ChenHK93} & 2 & No & No \\
Kamei and Kakugawa \cite{SS_Steiner02} & $|V|+|S|-1$ & Yes & No \\
Kamei and Kakugawa \cite{Kakugawa_Steiner_journal} & $2$ & Yes & No \\
\hline
This paper &  $O(\log |S|)$ & Yes & Yes \\
\hline
\end{tabular}
}
\caption{Distributed (deterministic) algorithms for the Steiner tree problem.}
\label{tableresume}
\end{center}
\end{table*}

Table~\ref{tableresume} summarizes our contribution compared to
previous works. Hence, our algorithm is the first superstabilizing
algorithm for the Steiner tree problem. Its approximation ratio is
logarithmic, which is not as good as the 2-approximation algorithm by
Kamei and Kakugawa in \cite{Kakugawa_Steiner_journal}. However, this
latter algorithm is not superstabilizing. Designing a superstabilizing
2-approximation algorithm for the Steiner tree problem is a
challenge. Indeed, all known 2-approximation distributed algorithms
(self-stabilizing or not) for the Steiner tree problem use a minimum
spanning tree (MST), and the design of a superstabilizing algorithm
for MST is a challenge by itself.


\section{Model and notations} 
\label{sec:model}
We consider an undirected weighted connected network $G=(V,E,w)$ where
$V$ is the set of nodes, $E$ is the set of edges and $w: E \rightarrow
{\mathbb R}$ is a cost function. Nodes represent processors and edges
represent bidirectional communication links. Each node in the network
has an unique identifier. 
$S \subseteq V$ defines the set of members we have to connect. For any pair of nodes $u,v \in V$, we note $d(u,v)$ the distance of the shortest path $P(u,v)$ between $u$ and $v$ in $G$ (i.e. $d(u,v)=\sum_{e \in P(u,v)} w(e)$). For a node $v \in V$, we denote the set of its neighbors ${\mathcal N(v)}=\{u, (u,v) \in E\}$. A Steiner tree, $T$ in $G$ is a connected acyclic sub-graph of $G$ such that $T=(V_{T},E_{T})$, $S \subseteq V_{T} \subseteq V$ and $E_{T} \subset E$. We denote by $W(T)$ the cost of a tree $T$, i.e. $W(T)=\sum_{e \in T} w(e)$.


We consider an asynchronous communication message passing model with
FIFO channels (on each link messages are delivered in the same order as
they have been sent).

A {\it local state} of a node is the value of the local variables of
the node and the state of its program counter. 
We consider a fined-grained communication atomicity model \cite{BK07,Dol00}. That is,
each node maintains a local copy of the variables of its
neighbors. These variables are refreshed via special
messages (denoted in the sequel $\InfoMsg$) exchanged periodically 
by neighboring nodes.
A {\it configuration} of the
system is the cross product of the local states of all nodes in the
system plus the content of the communication links. 
The transition from a configuration to the next one is produced by the
execution of an atomic step at a node. An {\it atomic step} at node $p$ 
is an internal computation based on the
current value of $p$'s local variables and a single communication
operation (send/receive) at $p$. 
An {\it execution} of the system is an infinite sequence of
configurations, $e=(c_0, c_1, \ldots c_i, \ldots)$, where each
configuration $c_{i+1}$ follows from $c_i$ by the execution 
of a single atomic step.

In the sequel we consider the system can start in any
configuration. That is, the local state of a node can be corrupted.
Note that we don't make any assumption on the bound of
corrupted nodes. In the worst case all nodes in the system 
may start in a corrupted configuration.
In order to tackle these faults we use {\it self-stabilization} techniques.

Given $\mathcal{L_{A}}$ a non-empty
\emph{legitimacy predicate}\footnote{A legitimacy predicate is 
defined over the configurations of a system and 
is an indicator of its correct behavior.} 
an algorithm $\mathcal{A}$ is \emph{self-stabilizing} 
iff
the following two conditions hold:
\textsf{(i)} Every computation of $\mathcal{A}$ starting from a configuration
satisfying $\mathcal{L_A}$ preserves $\mathcal{L_A}$ (\emph{closure}).  
\textsf{(ii)} Every computation of $\mathcal{A}$ starting from an arbitrary configuration
contains a configuration that satisfies $\mathcal{L_A}$
(\emph{convergence}).

A {\it legitimate configuration} for the Steiner Tree is a
configuration that provides an instance of a tree $T$ spanning $S$. 
Additionally, we expect a competitiveness of $\log(z)$, 
i.e. $\frac{W(T)}{W(T^*)} \leq \log(z)$, with $|S|=z$ and $T^*$ an optimal Steiner tree.

In the following we propose a self-stabilizing Steiner tree algorithm.
We expect our algorithm to be also {\it superstabilizing}
\cite{Dolev_SuperStab}. That is, 
given a class of topology changes $\Lambda$ and a passage predicate, an algorithm 
is superstabilizing with respect to $\Lambda$ iff it
is self-stabilizing, and for every
execution\footnote{\cite{Dolev_SuperStab} use the notion of trajectory
which is the execution of a system enriched with dynamic actions.} $e$ beginning at a
legitimate state and containing a single topology change event of type
$\Lambda$, the passage predicate holds for every configuration in $e$.





In the following we propose a self-stabilizing Steiner tree algorithm and
extend it 
to a superstabilizing Steiner tree algorithm
that copes with the Steiner members and tree edges removal. During the tree
restabilization the algorithm verifies a passage predicate detailed below. 
Then, we discuss the extension of the algorithm to fully dynamic
settings (the add/removal of members, nodes or links join/leave). 
This second extension offers no guarantees during the restabilization period.

\section{The Superstabilizing Algorithm \STT}

The section describes a superstabilizing algorithm for the Steiner tree problem, called \STT.
It implements the
technique proposed by Imase and
Waxman~\cite{ImaseWaxman91},  in a stabilizing manner. That is, each Steiner member is connected
to the existing Steiner tree via a shortest path. Note that in a
stabilizing setting the initial configuration may be arbitrary hence
nodes have to perpetually verify the coherency of their state:
a Steiner member has to be connected to the Steiner
tree via a shortest path while a not Steiner node which does not
serve for the tree connectivity has to be recognized as disconnected.
In our implementation we assume a special node that acts as the root
of the Steiner tree. To this end, we assume an underlying overlay that elects a
leader within the Steiner group. That is, we assume a leader oracle that returns to every node
in the system its status: leader or follower. The leader of the system
is a node in the Steiner group. Note that the implementation of a
leader oracle is beyond the scope of the current work. Several
implementations fault-tolerant and self-stabilizing can be found in
\cite{DDF07}. Recently, algorithms that implement leader
oracles in dynamic settings are proposed in \cite{Baldoni08} for example.

\subsection{Detailed description}


\subsubsection{Variables and Predicates}

For any node $v \in V(G)$, $N(v)$ is the neighbors set of $v$ in the network $G$ (our algorithm is built upon an underlying self-stabilizing protocol that regularly updates the neighbor set of every node). We denote by $\id_v \in \mathbb{N}$ the unique network identifier of $v$. Every node $v$ maintains seven variables for constructing and maintaining a Steiner tree. Three of them are integers, and the others are booleans.

\begin{itemize}
\item $\parent_v$: $\id$ of the parent of node $v$ in the current tree;
\item $\level_v$: number of nodes on the path between the root and $v$
in spanning tree;
\item $\dist_v$: the shortest distance to a node already connected to the current tree;
\item $\member_v$: \emph{true} if $v \in S \subseteq V$, \emph{false} otherwise (this is not a variable wrote by the algorithm but only read);
\item $\need_v$: \emph{true} if $v \in S \subseteq V$ or $v$ has a descendant which is a member, \emph{false} otherwise;
\item $\connect_v$: \emph{true} if $v$ is in the current tree, \emph{false} otherwise;
\item $\connectpt_v$: \emph{true} if $v$ is a member or $v$ has more than one children in the current tree, \emph{false} otherwise.
\end{itemize}

\begin{small}
\begin{figure*}
\scalebox{1}
{
\fbox{
\begin{minipage}{15.7cm}
\footnotesize{
\begin{description}
\item[$\CRoot(v)$] $\equiv \dist_v=0 \wedge \parent_v=\id_v \wedge \need_v \wedge \connect_v \wedge \connectpt_v \wedge \level_v=0$
\item[$\CParent(v)$]$ \equiv (\exists u \in N(v), \parent_v=\id_u)\wedge (\level_v = \level_{\parent_v}+1) \wedge (\not \exists u \in N(v), \parent_u=\id_v \wedge \level_u\neq \level_v+1)$
\item[$\Ask(v)$] $\equiv (\exists u \in N(v), \parent_u=\id_v \wedge \need_u)$
\item[$\Better(v)$]$ \equiv (\neg \connect_v \wedge \dist_v \neq \distNotConnect(v)) \vee (\connect_v \wedge \dist_v \neq \distConnect(v))$
\item[$\ConnectPtS(v)$]$ \equiv (\member_v \wedge \connectpt_v) \vee (\neg \member_v \wedge |\{u:u\in N(v) \wedge \parent_u = \id_v \wedge \connect_u\}|>1)$
\item[$\ConnectS(v)$]$ \equiv \need_v \wedge \connect_{\parent_v} \wedge [\member_v \vee (\neg \member_v \wedge \Ask(v))]$
\item[$\distNotConnect(v)$]$\equiv \min(\min\{w(u,v): u \in N(v) \wedge \connect_u\} ,\dist_u+w(u,v): u \in N(v) \wedge \neg \connect_u\})$
\item[$\parentNotConnect(v)$]$\equiv \arg (\distNotConnect(v))$
\item[$\distConnect(v)$]$ \equiv \min(\min\{w(u,v): u \in N(v) \wedge \connect_u \wedge \connectpt_u\}, \min\{\dist_u+w(u,v): u \in N(v) \wedge [\neg \connect_u \vee (\connect_u \wedge \neg \connectpt_u))\}$
\item[$\parentConnect(v)$]$\equiv \arg (\distConnect(v) )$
\end{description}
}
\end{minipage}
}
}
\caption{Predicates used by the algorithm.}
\label{fig:algo_predicates}
\end{figure*}
\end{small}

\subsubsection{Description of the algorithm}
\label{sec:algo}
Every node $v \in V$ sends periodically its local variables to each of its neighbors
using $\InfoMsg$ messages. Upon the reception of this message a neighbor updates the local copy
of its neighbor variables. The description of a $\InfoMsg$ message is
as follows:\\ 
$\InfoMsg_v[u]=\langle\InfoMsg,\parent_v,\level_v,\dist_v,\need_v,\connect_v,\connectpt_v \rangle.$

Our algorithm is a four phase computation: (1)
first nodes update their distance to the existing Steiner tree, then (2) nodes
request connection (if they are members or they received a connection
demand), then (3) they establish the
connection, and finally (4) they update the state of the current Steiner tree.  These
phases have to be performed in the given order.
That is, a node cannot initiate a request for connection for example if it has not yet updated
its distance.

Note that if a node detects a
distance modification in its neighborhood, it can change its
connection to the current tree. Therefore a node  before computing any
other action must update its
distance to the current tree. 

Every node in the network, 
maintains a
parent link. The parent of a node is one of its neighbors having the shortest
distance to the current tree. Note that erroneous initial configurations may create cycles in the
parent link. To break these cycles, we 
use the notion of tree level, defined by the variable \level: 
the root has the level zero and each node has the level equal to its parent level plus one.

When a member tries to connect to the
tree, it sets its variable
\need\/ to \emph{true}. 
When a node in the current tree receives a
demand for connection, an acknowledgment is sent back along the
requesting path enabling every node along this path to set a variable \connect\/ to
\emph{true}. Nodes with \connect\/ set \emph{true} are called
``connected nodes''. 

Whenever a node detects an incoherency in its neighborhood it disconnects
from the current tree.

In order to give a $\log(z)$-approximate Steiner tree, we introduce a variable \connectpt. This variable
signals if a node is a connection point or not. A connection
point is a connected node which is a member or has more than one
connected child. 


\paragraph{\underline{Algorithm:}}
Upon the reception  of a $\InfoMsg$ nodes correct their local state
via the rules explained below then broadcast their new local state in 
their local neighborhood.
%



\paragraph{\underline{Root:}}

In a coherent state the root has a distance and a level equal to zero,
variables $\need$ and $\connect$ 
are \emph{true} since the root is always connected (it always belongs to the Steiner tree). Variable $\connectpt$ is \emph{true} because the root is a member so a connection point.
Whenever the state of the root is incoherent the Rule \CRA\/ below is enabled.
\begin{small}
\begin{description}
\item[\CRA: (Root reinitialization)]~\\
\textbf{If} $\IsRoot(v) \wedge \neg \CRoot(v)$ \textbf{then}\\
\hspace*{0,8cm}$\dist_v:=0;$ $\parent_v:=\id_v;$ $\need_v:=true\/; $ $\connect_v:=true;$\\ 
\hspace*{0,8cm}$\connectpt_v:=true;$ $ \level_v:=0;$
\end{description}
\end{small}

\paragraph{\underline{Distance update:}} 

Rule \ARA\/ enables to a not connected node to compute its shortest path
distance to the Steiner tree 
as follows: Take the minimum between the edge weights with connected
neighbors and the distances 
with not connected neighbors. If a not connected node detects it has a
better 
shortest path (see Predicate $\Better$) then it updates its distance 
(using Predicates \distNotConnect\/ and \distConnect) and changes its
other variables accordingly. 

The same rule is used to reinitiate the state of a node if it observes that its parent is no more in its neighborhood.

Similarly, Rule \ARB\/ enables to a connected node to compute its
shortest path distance. In order to execute this rule 
a connected node must have a stabilized connection. The distance is
computed as for a not connected node but a connected node compares
this distance with its local distance towards its connection point and takes the minimum.


\begin{small}
\begin{description}
\item[\ARA : (Distance stabilization for not connected nodes)]~\\
\textbf{If} $\neg \IsRoot(v) \wedge [(\neg \connect_v \wedge \Better(v)) \vee \neg \CParent(v)]$ \textbf{then}\\ 
\hspace*{0,8cm}$\dist_v:=\distNotConnect(v);$ $\parent_v:=\parentNotConnect(v);$\\
\hspace*{0,8cm}$ \connect_v:=false;\connectpt_v:=false;$ $\level_v:=Level_{\parent_v}+1;$

\item[\ARB: (Distance stabilization for connected nodes)]~\\
\textbf{If} $\neg \IsRoot(v) \wedge \connect_v \wedge \ConnectS(v) \wedge \Better(v) \wedge \CParent(v) \wedge \ConnectPtS(v)$ \textbf{then}\\ \hspace*{0,8cm}$\dist_v:=\distConnect(v);$ $\parent_v:=\parentConnect(v);$\\
\hspace*{0,8cm}$ \level_v:=Level_{\parent_v}+1;$
\end{description}
\end{small}
\paragraph{\underline{Request to join the tree:}}

Variable $\need$ is used by a not connected node to ask to its parent
a connection to the current Steiner tree. 
Since a member must be connected to the Steiner tree, each member sets
this variable to \emph{true} using Rule \CRB. 
A not member and not connected node which detects that a child wants
to be connected (see Predicate $\Ask$) 
changes its variable $\need$ to \emph{true}. This connection request
is forwarded in the spanning 
tree until a not connected node neighbor of a connected node is
reached.

A not connected node sets its variable $\need$ to \emph{false} using
Rule \CRC\/ if it is not a member and it has no child requesting a
connection. 


%
\begin{small}
\begin{description}
\item[\CRB: (Nodes which need to be connected)]~\\
\textbf{If} $\neg \IsRoot(v) \wedge \neg \need_v \wedge \neg \connect_v \wedge \neg \Better(v) \wedge \CParent(v) \wedge [\member_v \vee (\neg \member_v \wedge \Ask(v))]$\\
\textbf{then} $\need_v:=true;$
\item[\CRC: (Nodes which need not to be connected)]~\\
\textbf{If} $\neg \IsRoot(v) \wedge \neg \connect_v \wedge \need_v \wedge \neg \member_v \wedge \neg \Ask(v) \wedge \neg \Better(v) \wedge \CParent(v)$ \\
\textbf{then} $\need_v:=false;$
\end{description}
\end{small}
\paragraph{\underline{Member connection:}}

When a not connected node neighbor of a connected node (i.e. which
belongs to the Steiner tree) detects a connection request from a child
(i.e. Predicate $\Ask$ is \emph{true}), an acknowledgment is sent
backward using variable $\connect$ along the request path. Therefore
every not connected node 
on this path uses Rule \CRD\/ and sets $\connect$ to \emph{true} until
the member that asked the connection is connected. Only a node that
has (1) no better path, 
(2) its variable $\need=true$ and (3) a connected parent can use Rule
\CRD.

A connected node becomes not connected if its connection path is no
more stabilized (i.e. Predicate $\ConnectS$ is false). Therefore, it
sets $\connect$ to \emph{false} using Rule \CRE.

The parent distance is used for the disconnection of a subtree whenever a
fault occurs in the network. 
If a fault occurs (parent distance is infinity), a connected node in the subtree below a faulty node or edge in the spanning tree must be disconnected using Rule \CRG. So the node sets $\connect$ to false and $\dist$ to infinity and waits until all its subtree is disconnected (i.e. it has no connected child).

%

\begin{small}
\begin{description}
\item[\CRD: (Nodes which must be connected)]~\\
\textbf{If} $\neg \IsRoot(v) \wedge \neg \connect_v \wedge \ConnectS(v) \wedge \neg \Better(v) \wedge \CParent(v)$\\
\textbf{then} $\connect_v:=true;$
\item[\CRE: (Nodes which must not be connected)]~\\
\textbf{If} $\neg \IsRoot(v) \wedge \connect_v \wedge \neg \ConnectS(v) \wedge \CParent(v) \wedge \dist_{\parent_v}\neq \infty$ 
\textbf{then} $\connect_v:=false;$
\item[\CRG: (Consequence of a deletion)]~\\
\textbf{If} $\neg \IsRoot(v) \wedge \connect_v \wedge \neg \ConnectS(v) \wedge \CParent(v) \wedge \dist_{\parent_v}= \infty $
\textbf{then} $\connect_v:=false; \dist_v:=\infty;$ $\connectpt_v:=false; $\\
\hspace*{1,1cm} send $\InfoMsg_v$ to all $u \in N(v)$ and wait until ($\not \exists u\in N(v), \parent_u=$\\ \hspace*{1,2cm}$\id_v \wedge \connect_u$)

\end{description}
\end{small}
\paragraph{\underline{Update the Steiner tree:}}

Since we use shortest paths to connect members to the existing Steiner
tree, we must maintain distances from members to 
connection points. A connection point is a connected member or a connected node with more than one connected children, i.e. the root of the branch connecting a member. Every connected node updates its distance if it has a better path. So thanks to connection points and distance computation, we maintain a shortest path between a member and the Steiner tree in order to respect the construction in \cite{ImaseWaxman91}. Rule \CRF\/ is used by a connected node to change its variable $\connectpt$ and to become or not a connection point. This rule is executed only if the connected node has a stabilized connection path (i.e. Predicate $\ConnectS$ is \emph{true}).

\begin{small}
\begin{description}
\item[\CRF: (Connected path stabilization)]~\\
\textbf{If} $\neg \IsRoot(v) \wedge \connect_v \wedge \ConnectS(v) \wedge \CParent(v) \wedge \neg \ConnectPtS(v)$\\ 
\textbf{then} \hspace*{0,1cm}\textbf{If} $\member_v$ \textbf{then} $\connectpt_v:=true;$ \\ \hspace*{0,8cm}\textbf{Else} $\connectpt_v := |\{u:u\in N(v) \wedge \parent_u = \id_v \wedge \connect_u\}|>1;$
\end{description}
\end{small}

\section{Correctness and proof in Static setting}
\label{sec:correction}

\begin{definition}[Legitimate state of DST]
\label{def:legitimate_state}
A configuration of algorithm is legitimate iff each process $v \in V$ satisfies the following conditions:
\begin{enumerate}
\item a Steiner tree $T$ spanning the set of members $S$ is constructed;
\item a shortest path connects each member $v \in S$ to the existing tree.
\end{enumerate}
\end{definition}


\begin{lemma}
\label{lem:correct1}
Eventually the node's parent relation constructs a rooted spanning tree in the network.
\end{lemma}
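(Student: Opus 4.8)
The plan is to decompose the claim into three facts and then combine them. Write $r$ for the unique node with $\IsRoot(r)$ (guaranteed by the leader oracle). The facts are: \textbf{(a)} $r$ eventually reaches a state with $\CRoot(r)$ true and keeps it forever; \textbf{(b)} from some point on, no configuration contains a $\parent$-cycle; \textbf{(c)} from some point on, every node $v\neq r$ satisfies $\CParent(v)$. Assuming (a)--(c), take any configuration of the common suffix and any node $v$, and follow the pointers $v,\parent_v,\parent_{\parent_v},\dots$. By the first conjunct of $\CParent$ every step lands on a neighbour, so this is an infinite walk in the finite graph $G$; it cannot visit any node twice before reaching $r$, since a repeated visit not involving $r$ would witness a $\parent$-cycle forbidden by (b), and it cannot get stuck at a non-root node $v$ either, because $\CParent(v)$ with $v\ne r$ cannot have $\parent_v=\id_v$ ($v\notin N(v)$). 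Hence the walk reaches $r$. Furthermore $\CParent(v)$ forces $\level_v=\level_{\parent_v}+1$, so the walk's $\level$ values strictly decrease down to $\level_r=0$. Thus every node is joined to $r$ by a finite acyclic $\parent$-path, i.e.\ the $\parent$ relation is a rooted spanning tree.

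Fact \textbf{(a)} is immediate: every rule except \CRA\ carries $\neg\IsRoot(v)$ in its guard, so only \CRA\ can ever touch $r$'s variables; the guard of \CRA\ is $\IsRoot(v)\wedge\neg\CRoot(v)$, disabled exactly when $\CRoot(r)$ holds, and its assignment makes $\CRoot(r)$ true. So after the first execution of \CRA\ (or at once, if $\CRoot(r)$ already holds) we have $\CRoot(r)$ forever, and in particular $\level_r=0$ from then on.

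For \textbf{(b)} and \textbf{(c)} I would argue jointly. First, the level mechanism forbids persistent cycles: along a $\parent$-cycle $v_0\to v_1\to\cdots\to v_{k-1}\to v_0$ on which $\CParent$ holds, the equalities $\level_{v_i}=\level_{v_{(i+1)\bmod k}}+1$ sum to $0=k$, a contradiction; hence every cycle carries a node with $\neg\CParent$, which is enabled for \ARA. More generally, whenever the $\parent$ relation is not already a rooted spanning tree of $G$ rooted at $r$, some node is enabled --- $r$ via \CRA, or a non-root $v$ with $\neg\CParent(v)$ via \ARA\ --- since any defect (a pointer to a non-neighbour or a corrupted id, an inconsistent $\level$, a non-root self-loop, or a cycle) shows up as $\neg\CParent$ somewhere; and the \ARA\ assignment $\parent_v:=\parentNotConnect(v)\in N(v)$, $\level_v:=\level_{\parent_v}+1$ repairs the first two conjuncts of $\CParent(v)$ locally. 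Convergence itself I would then obtain by the standard layered argument anchored at the stabilized $r$: for $k=0,1,\dots$ prove that there is a time after which the set of nodes whose $\parent$-chain reaches $r$ through mutually consistent, strictly decreasing $\level$s of depth at most $k$ is frozen and can only grow, invoking the periodic $\InfoMsg$ exchanges for the fairness needed to let continuously enabled nodes move; finiteness of $G$ then exhausts $V$ and yields the suffix of (b)--(c).

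The step I expect to be the crux is this last induction. Executing \ARA\ at $v$ changes $\level_v$, which can momentarily break $\CParent$ at a child of $v$, so the count of nodes with $\neg\CParent$ is not monotone; one must show that such damage is always pushed toward strictly larger $\level$ values and never propagates back toward $r$, using that the third conjunct of $\CParent(v)$ constrains $v$ only through its own children's levels and that the \ARA\ action always installs a link consistent with its (possibly already frozen) parent. A secondary difficulty is the interplay with the distance-driven choice inside $\parentNotConnect$: while the levels are settling the chosen parents may still change, so one has to be content with showing that the $\parent$ relation eventually becomes --- and, once the distances have also stabilised, remains --- a rooted spanning tree, rather than pinning down a fixed one.
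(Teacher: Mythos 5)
Your proposal takes essentially the same route as the paper's proof: a unique root whose state is stabilized by Rule \CRA, and the \level/\CParent\ mechanism to detect any parent cycle locally and break it with Rule \ARA, concluding that a single root plus acyclicity of the parent relation yields a spanning tree. The layered convergence induction you flag as the crux is precisely the step the paper itself dispatches with a one-line assertion (nodes with $\neg\CParent$ execute \ARA\ and "break the cycle"), so your sketch is, if anything, more explicit than the published argument.
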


\begin{proof}
Function $\IsRoot(v)$ is a perfect oracle which returns true if $v$
is the root of the tree and false otherwise. So we assume that there
is a time after which only one root exists in the network. Moreover
Rule \CRA\/ is only used by the root to correct its corrupted variables.

Since there is only one root in the network, to have a spanning tree
we must show that each node has one parent and there is no
cycle. First note that each node $v$ could have at each time only one
parent in its neighborhood (see predicate $\CParent(v)$) designed by
variable $\parent_v$, only root has its parent equal to itself. Each
node maintains its level stored in variable $\level_v$ which is updated
by Rules \CRA\/, \ARA\/ and \ARB\/. The level of each node is equal to the level
of its parent plus one, except for the root which has a level at zero
(see Rule \CRA\/). Suppose there is a cycle in the node's parent
relation. This implies that there is a time after which we have a
sequence of nodes with a growing sequence of levels. But there is at
least one node $x$ with a smaller level than its parent $y$ in the
cycle. That is, for $x$ we have $\level_x \neq \level_y +1$ and for $y$
we have $\parent_x=ID_y \wedge \level_x \neq \level_y +1$. So predicate
$\CParent$ is false for $x$ and $y$, thus $x$ and $y$ can execute Rule \ARA\/
to reset their variables and break the cycle. Therefore, there is a
time after which no cycle exists in the structure described by the
node's parent relation. Since there is only one root in the network
(i.e. $\level_v=0$ and $\parent_v=ID_v$) and there is no cycle, thus the
node's parent relation describe one tree spanning the network.
\end{proof}

\begin{lemma}
\label{lem:correct1bis}
Eventually each non-connected node knows its distance to the current tree.
\end{lemma}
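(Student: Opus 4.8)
The plan is to establish that the distance variables of non-connected nodes converge to the true shortest-path distances to the set of connected nodes, by an argument in the style of a self-stabilizing shortest-path forest; I assume throughout that edge weights are positive, as is standard for the Steiner tree problem. I would work in a suffix of the execution in which, by Lemma~\ref{lem:correct1}, the parent relation already describes a rooted spanning tree, and in which the set $C=\{u:\connect_u\}$ of connected nodes has become fixed; that $C$ always contains the root follows from Rule~\CRA. For a non-connected node $v$, let $\delta(v)$ denote the weight of a shortest path from $v$ to $C$. The key observation is that the definition of $\distNotConnect$ is precisely a Bellman-type equation for $\delta$: if every non-connected node $u$ satisfied $\dist_u=\delta(u)$, then $\distNotConnect(v)$ would evaluate to exactly $\delta(v)$.

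The first step is to exclude persistent underestimation: in any configuration of this suffix at which Rule~\ARA is disabled everywhere, $\dist_v\ge\delta(v)$ for every non-connected $v$. Suppose not and pick a non-connected node $v$ minimizing $\dist_v$ among the underestimators. Because $v$ is non-connected and \ARA is disabled, $\neg\Better(v)$ holds, so $\dist_v=\distNotConnect(v)$. If the defining minimum is attained at a connected neighbor $u$, then $\dist_v=w(u,v)\ge d(v,u)\ge\delta(v)$, a contradiction. If it is attained at a non-connected neighbor $u$, then $\dist_v=\dist_u+w(u,v)>\dist_u$ since $w(u,v)>0$, and combining this with $\delta(v)\le\delta(u)+w(u,v)$ gives $\dist_u<\delta(u)$, so $u$ is an underestimator with strictly smaller distance value, contradicting the choice of $v$. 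Positivity of the weights is exactly what makes this ``take the smallest underestimator'' induction well-founded despite the fact that the distances of two neighbors can depend on each other.

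The second step is a finite induction on the distance rank. List the distinct values of $\delta(\cdot)$ as $d_1<d_2<\cdots$; I claim that for each $k$ there is a configuration after which every non-connected $v$ with $\delta(v)=d_k$ permanently satisfies $\dist_v=\delta(v)$. Assuming this holds for all ranks below $k$ and that those nodes have frozen, a rank-$k$ node $v$ has $\distNotConnect(v)=\delta(v)$; as long as $\dist_v\ne\delta(v)$ the predicate $\Better(v)$ is true, so Rule~\ARA --- the only rule that modifies $\dist_v$ at a non-connected node --- is enabled and sets $\dist_v$ (and $\parent_v$, consistently) to $\delta(v)$, after which, by the no-underestimation step, the value is a fixpoint of \ARA and never decreases again. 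The base case $k=1$ is the same argument using connected neighbors directly. Since the network is finite there are finitely many ranks, whence eventually $\dist_v=\distNotConnect(v)=\delta(v)$ for every non-connected node, which is the statement of the lemma.

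The main difficulty is the hypothesis I isolated at the outset. The distance layer is genuinely coupled to $C$ through Rules~\CRD, \CRE\ and \CRG, so before $C$ stabilizes the distances are chasing a moving target, the ``smallest underestimator'' is not a fixed node, and the rank induction cannot even be started. Discharging this requires the phase ordering of the algorithm (a node updates its distance before it requests or grants any connection) together with the later lemmas showing that \connect\ and \need\ eventually stop changing; only then can one pass to a suffix where $C$ is constant. A minor additional point is that some non-connected nodes may carry $\dist_v=\infty$ (left over from Rule~\CRG or from the arbitrary initial configuration), but these are overestimates and are absorbed by the same rank induction.
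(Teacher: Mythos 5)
Your route is the same in spirit as the paper's -- convergence of the \ARA-driven distance computation toward the shortest distance to the connected set -- but the extra structure you add is where the argument fails to close. In the rank induction, the step ``assuming the nodes of rank $<k$ have frozen, a rank-$k$ node $v$ has $\distNotConnect(v)=\delta(v)$'' is not justified: $\distNotConnect(v)$ also minimizes over non-connected neighbours $u$ of rank $\geq k$, and such a $u$ may still carry an underestimate $\dist_u<\delta(u)$ inherited from the arbitrary initial configuration. Then $\distNotConnect(v)<\delta(v)$, $\Better(v)$ holds, and Rule \ARA\ writes an underestimate into $\dist_v$ -- possibly after $v$ has already held the correct value, so ``never decreases again'' fails as well. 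You cannot invoke your first step to exclude this, because you proved the no-underestimation property only for configurations in which \ARA\ is disabled at \emph{every} node, i.e.\ a quiescent situation, whereas the rank induction needs it while other nodes are still active. The missing ingredient is the standard flushing argument for spurious low values: every adoption of a neighbour's value adds at least the minimum (positive) edge weight, so after a bounded number of rounds every $\dist_u$ exceeds $\min(\delta(u),B)$ for any fixed bound $B$; only once underestimates are flushed can the downward convergence by rank (or the paper's simpler ``an overestimating node eventually sees a better path'' argument) be run.

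The second hole is the one you flag yourself: the hypothesis that the connected set $C$ is fixed throughout the suffix. Discharging it by appeal to ``the later lemmas showing that \connect\ and \need\ eventually stop changing'' is circular, since those lemmas (Lemmas \ref{lem:correct2}--\ref{lem:correct4}) are proved assuming the distances have already stabilized, i.e.\ they depend on the present lemma. To be fair, the paper's own short proof is looser than yours and silently makes the same assumption (it treats ``any connected node'' as a static target and only argues that an overestimating node eventually sees a better path), so your proposal is an honest refinement of the same idea; but a complete argument must either interleave the convergence of $\dist$ and $\connect$ (for instance by induction on the number of members already attached, in the style of Lemma \ref{lem:correct4}) or restate the lemma relative to the eventually-stable connected component of the root, rather than citing downstream results.
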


\begin{proof}
A node $v$ is connected iff $\connect_v=true$. There is at least one
connected node because the root is always connected (see Rule \CRA\/),
otherwise there is a time where the root corrects its variables using
Rule \CRA\/. According to Lemma \ref{lem:correct1}, a tree spanning the
network is constructed. Let $x$ be a non-connected node, $d_x$ the
distance of the shortest path from $x$ to any connected node and $y$
the neighbor on this shortest path. Suppose $\dist_x>d_x$, thus it
exists a time after which a neighbor offers a better path and $x$ can
execute Rule \ARA\/ because predicate $\Better(x)$ is true. So $x$
corrects $\dist_x$ as the minimum distance in its neighborhood (see
function $\distNotConnect(x)$). Therefore there is a time after which
$\dist_x=d_x$. Moreover, at each time $x$ executes Rule \ARA\/ the variable
$\parent_x$ is modified respectively to variable $\dist_x$ (see function
$\parentNotConnect(x)$) and thus $\parent_x$ stores the neighbor of $x$
which offers to $x$ the shortest path to any connected
node. Therefore, there is a time after which when we have $\dist_x=d_x$
then $\parent_x=y$.
\end{proof}

\begin{lemma}
\label{lem:correct2}
Eventually each Steiner member is linked to root via a connected path.
\end{lemma}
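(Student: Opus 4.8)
The plan is to drive the system to a configuration that is frozen with respect to all variables relevant to connectivity, and then to read the connected path directly off the rule guards. By Lemma \ref{lem:correct1} there is a configuration from which on the $\parent$ relation always describes a rooted spanning tree, and by Lemma \ref{lem:correct1bis} there is a later configuration after which every non-connected node $v$ has $\dist_v$ equal to its true distance to the current tree with $\parent_v$ realizing it; in particular $\Better(v)$ is then \emph{false} for every non-connected $v$, and $\CParent(v)$ holds everywhere, so Rules \CRA and \ARA no longer fire. I would then argue that the only remaining activity --- connected nodes re-routing through Rule \ARB, the $\connectpt$ updates of Rule \CRF, and the (dis)connections of Rules \CRD, \CRE, \CRG --- is itself finite, so the system reaches a final fixed tree $T$ with fixed $\connect$ and $\connectpt$ values; for a member $v$ let $P_v = (\root = u_0, u_1, \dots, u_k = v)$ be its path to the root in $T$.

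Working in this stable regime, the first real step is to pin down the $\need$ flags. For a member $v$: if $v = \root$ then $\need_v$ is kept \emph{true} by Rule \CRA; otherwise, whenever $v$ is not connected the guard of Rule \CRB, namely $\neg\need_v \wedge \neg\connect_v \wedge \neg\Better(v) \wedge \CParent(v) \wedge \member_v$, is satisfied, so $\need_v$ is set \emph{true}, and it can never be reset because Rule \CRC requires $\neg\member_v$; and if $v$ is connected with $\need_v$ \emph{false} then $\ConnectS(v)$ is \emph{false}, forcing a disconnection by Rule \CRE or \CRG and reducing to the previous case. Hence eventually $\need_v$ is \emph{true} for every $v \in S$. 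Then, inducting \emph{up} each $P_v$: if $\need_{u_i}$ is \emph{true} and $u_{i-1} \notin S$, then $\Ask(u_{i-1})$ holds, witnessed by the child $u_i$, so Rule \CRB forces $\need_{u_{i-1}}$ \emph{true} and Rule \CRC can never undo it. Thus eventually every node on every $P_v$ has $\need$ equal to \emph{true}.

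The second real step is to obtain the connected path by inducting \emph{down} $P_v$ from the root. The root is permanently connected by Rule \CRA. Assume $u_i$ is permanently connected and consider $u_{i+1}$: we have $\need_{u_{i+1}}$ \emph{true}, $\connect_{\parent_{u_{i+1}}} = \connect_{u_i}$ \emph{true}, and $u_{i+1}$ is either a member or has the child $u_{i+2} \in P_v$ with $\need_{u_{i+2}}$ \emph{true}, so $\Ask(u_{i+1})$ holds and hence $\ConnectS(u_{i+1})$ is \emph{true}; together with $\neg\Better(u_{i+1})$ and $\CParent(u_{i+1})$ this keeps Rule \CRD enabled while $\connect_{u_{i+1}}$ is \emph{false}, and once $\connect_{u_{i+1}}$ is \emph{true} it stays so because $\ConnectS(u_{i+1})$ remains \emph{true}, so neither Rule \CRE nor Rule \CRG applies. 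By induction $u_k = v$ becomes and stays connected, so $P_v$ is a connected path from $\root$ to $v$; applying this to every $v \in S$ proves the lemma.

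I expect the main obstacle to be the claim in the first paragraph, that the re-route/disconnect/reconnect activity terminates. After the distances of non-connected nodes freeze, a connected node may still re-parent via Rule \ARB to chase the nearest connection point, its new parent may turn out to be non-connected and thus force a disconnection via Rule \CRE followed by a fresh reconnection via Rule \CRD, and meanwhile $\connectpt$ bits flip via Rule \CRF, which feeds back into $\ConnectS$ and into the distances used by Rule \ARB. To close this loop I would use a well-foundedness argument handling the variables in the order $\need$, then $\connect$ (top-down from the root), then $\connectpt$, then the $\dist$-values used by Rule \ARB --- the last, possibly chicken-and-egg, layer requiring a lexicographic potential on the distance vector --- and it is this layer, rather than the clean inductions above, where the genuine work lies.
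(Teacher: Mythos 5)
Your two inductions are exactly the paper's argument: first propagate $\need=true$ upwards along the spanning-tree path from each member (Rule \CRB\ with $\Ask$, non-resettable by \CRC\ for members), then propagate $\connect=true$ downwards from the always-connected root via \CRD\ using $\ConnectS$. Where you diverge is the preamble on which you hang these inductions: you require the system to reach a \emph{final fixed tree} $T$ with frozen $\connect$ and $\connectpt$ values, define $P_v$ as the path in that frozen tree, and then concede that the termination of the \ARB/\CRE/\CRD/\CRF\ re-route--disconnect--reconnect dynamics --- the very fact that makes $T$ and $P_v$ well defined --- is only sketched via an unspecified lexicographic potential. As written, that is a genuine gap: every subsequent step quantifies over the path $P_v$ of the fixed tree, so nothing in your second and third paragraphs is established until the quiescence claim is, and you explicitly do not establish it.

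Note that the paper does not ask for (nor prove) this global quiescence. Its proof works at the level of eventually-enabled rules: Lemma~\ref{lem:correct1} gives the rooted spanning tree defined by the parent relation, Lemma~\ref{lem:correct1bis} stabilizes $\dist$ and $\parent$ for the non-connected nodes so that $\Better(v)$ is false for them, and the proof then simply conditions on ``$\Better(v)$ returns false'' and runs the two inductions along the spanning-tree path, treating \ARA/\ARB\ as rules that take priority whenever a better path exists. You could repair your proposal the same way: drop the fixed-tree preamble, run the $\need$ induction up and the $\connect$ induction down the parent path stabilized by Lemmas~\ref{lem:correct1} and~\ref{lem:correct1bis}, and observe that the rules you worry about (\ARB, \CRF) only fire on already-connected nodes under $\ConnectS$ and a strictly better path, so they do not re-enable \CRE\ along the path you are building. (To be fair, the interplay you isolate --- \ARB\ re-parenting feeding back into $\ConnectS$ and the distances --- is a real weakness that the paper glosses over rather than resolves; but the paper's proof does not rest on a full termination argument, whereas yours, as structured, does.)
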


\begin{proof}
A node $v$ is connected iff $\connect_v=true$. There is at least one
connected node because the root is always connected (see Rule \CRA\/),
otherwise there is a time where the root corrects its variables using
Rule \CRA\/. Moreover, according to lemma \ref{lem:correct1}, there is
only one root and a rooted tree spanning the network is
constructed. Thus it exists a path between each member and the root.

To prove the lemma, we first show that for each node $v$ on the path
connecting a member we have $\need_v=true$.

Each node $v$ (except the root) can change the value of its variable
$\need_v$ or $\connect_v$ to true respectively with Rule \CRB\/ and \CRC\/
only when $v$ has no neighbor with a lower distance than its parent
(i.e. $v$ has no better path so \ARA\/ and \ARB\/ are not
executable). Otherwise $\Better(v)$ returns true and Rules \ARA\/ or
\ARB\/ are uppermost used to correct $\dist_v$ and $\parent_v$. So we
suppose that $\Better(v)$ returns false.

Note that for any member $v$ we must have $\need_v=true$ otherwise $v$
executes Rule \CRB\/ to correct $\need_v$. Since there is a path from each
member $v$ to the root, the parent $u$ of a member will execute Rule
\CRB\/ because according to procedure $\Ask(u)$, $u$ has at
least a child $v$ s.t. $\need_v=true$. Thus $u$ changes the value of
its variable $\need_v$ if necessary. Therefore one can show by
induction using the same scheme that for each node $v$ on the path
between a member and the root we have $\need_v=true$.

Each node $v$ (except the root) with $\connect_v=false$ can correct
its variable $\connect_v$ only when Rule \CRB\/ is not executable
(i.e. $\need_v=true$) because predicate $\ConnectS(v)=false$ and
Rule \CRD\/ can not be executed. Since the root $u$ is always connected
(i.e. $\connect_u=true$), each child $v$ of the root with
$\need_v=true$ and $\connect_v=false$ can execute Rule \CRD\/ to change
the value of its variable $\connect_v$ if necessary because predicate
$\ConnectS(v)$ is satisfied. Thus one can show by induction that
for any node on the path between a member and the root we have
$\connect_v=true$.
\end{proof}

\begin{lemma}
\label{lem:correct3}
Eventually $\ConnectPtS(v)$ is true for every connected node $v$ on the path between each member and the root in the network.
\end{lemma}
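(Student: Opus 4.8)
The plan is to build on the three preceding lemmas: by Lemma~\ref{lem:correct1} the parent relation forms a rooted spanning tree, by Lemma~\ref{lem:correct1bis} every non-connected node eventually knows its true distance to the current tree, and by Lemma~\ref{lem:correct2} every Steiner member is eventually linked to the root by a path of connected nodes (with $\need_v = true$ all along it). So consider the execution after all these properties hold, and let $v$ be a connected node lying on the path from some member to the root. I want to show that eventually $\ConnectPtS(v)$ holds, i.e. either $v$ is a member with $\connectpt_v = true$, or $v$ is not a member and has more than one connected child. The idea is to argue that the only rule that can falsify $\ConnectPtS(v)$ for such a node is Rule~\CRF, and that Rule~\CRF, once enabled, re-establishes $\ConnectPtS(v)$; hence $\neg\ConnectPtS(v)$ is not stable on the connected path and must disappear.

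First I would do a case analysis on whether $v$ is a member. If $v \in S$, then $\ConnectPtS(v)$ reduces to $\connectpt_v = true$. If $\connectpt_v = false$ in some configuration, I check that all guards of Rule~\CRF\ are satisfied: $v$ is connected (hypothesis), $\ConnectS(v)$ holds because $v$ is on the connected path to the root (so $\need_v$, $\connect_{\parent_v}$, and $\member_v$ all hold — this is exactly what Lemma~\ref{lem:correct2} gives), $\CParent(v)$ holds eventually because the spanning tree has stabilized (Lemma~\ref{lem:correct1}), and $\neg\ConnectPtS(v)$ holds by assumption. So Rule~\CRF\ fires and sets $\connectpt_v := true$ (the member branch), which makes $\ConnectPtS(v)$ true; and no other rule of a non-root node writes $\connectpt_v$ to $false$ while these guards hold, so the value persists. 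If $v \notin S$, then since $v$ is an internal node of the union of member-to-root paths, it has at least one connected child toward some member; I need more than one. Here I would use the structure of the tree: a connected non-member node with exactly one connected child is precisely a node where $\ConnectPtS(v)$ is false, so Rule~\CRF\ is again enabled (same guard check as above), and its \textbf{Else} branch recomputes $\connectpt_v$ as the truth value of $|\{u : u \in N(v) \wedge \parent_u = \id_v \wedge \connect_u\}| > 1$, which is exactly $\ConnectPtS(v)$ for a non-member; so after the rule fires, $\ConnectPtS(v)$ holds by construction, and it stays true as long as the set of connected children does not change — which, in the post-stabilization regime guaranteed by Lemmas~\ref{lem:correct1}--\ref{lem:correct2}, it does not.

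The key observation tying it together is that $\ConnectPtS(v)$ appears negated only in the guard of Rule~\CRF, and the effect of Rule~\CRF\ is, by inspection of its assignment, to set $\connectpt_v$ to a value that makes $\ConnectPtS(v)$ true. Combined with the fact that the predicates $\ConnectS(v)$ and $\CParent(v)$ in the guard are eventually permanently true for nodes on the connected member-to-root path (by the earlier lemmas), this forces $\neg\ConnectPtS(v)$ to be transient: it is corrected in one step and never re-introduced. A convergence argument over the finitely many such nodes $v$ then yields a configuration from which $\ConnectPtS(v)$ holds for all of them and is preserved. The main obstacle I anticipate is the interleaving with Rules \ARB\ and \CRE: I must rule out that $v$ oscillates — disconnecting via \CRE\ or changing parent via \ARB\ before \CRF\ gets a chance to act — which requires invoking closure of the legitimacy predicate built up in Lemmas~\ref{lem:correct1}--\ref{lem:correct2} to assert that, past the stabilization point, $\connect_v$ and $\parent_v$ are fixed for nodes on the member-to-root path, so only \CRF\ remains enabled for such a $v$, and only finitely often.
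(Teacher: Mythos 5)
Your proposal is correct and follows essentially the same route as the paper's proof: after Lemma~\ref{lem:correct2} yields connected member-to-root paths (hence $\ConnectS(v)$ and, eventually, $\CParent(v)$), a case analysis on whether $v$ is a member shows that $\neg\ConnectPtS(v)$ enables Rule~\CRF, whose assignment re-establishes $\ConnectPtS(v)$, and a finite induction/convergence over the nodes of the paths concludes. Your added care about interleaving with Rules~\ARB\ and~\CRE\ and about persistence of the corrected value is a refinement the paper leaves implicit, but it does not change the argument.
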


\begin{proof}
According to Lemma \ref{lem:correct2}, there is a time after which we have paths of connected nodes between members and the root. Note that in this case predicate $\ConnectS(v)$ is true.

Suppose that $\ConnectPtS(v)$ for a connected node $v$ is false. If $v$ is a member then this implies that $\connectpt_v = false$ (see predicate $\ConnectPtS(v)$), so $v$ can execute Rule \CRF\/ to change the value of $\connectpt_v$ to true and we have $\ConnectS(v)=true$. Otherwise, let $v$ be the parent of a member $u$ on the path of connected nodes connecting $u$ to the root. This implies that $\connectpt_v \neq |\{u:u\in N(v) \wedge \parent_u = \id_v \wedge \connect_u\}|>1$ (see predicate $\ConnectPtS(v)$), so $v$ can execute Rule \CRF\/ to update $\connectpt_v$ and we have $\ConnectS(v)=true$. Thus one can show by induction on the height of the tree that it exists a time where $\ConnectS(v)$ is true for every connected node $v$ on the path between each member and the root.
\end{proof}

\begin{lemma}
\label{lem:correct4}
Eventually each member is connected by a shortest path to the current tree.
\end{lemma}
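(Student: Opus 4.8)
The plan is to build on the structural stabilization already obtained. By Lemmas~\ref{lem:correct1}, \ref{lem:correct2} and~\ref{lem:correct3}, eventually the parent relation is a fixed rooted spanning tree of $G$, every member is joined to the root by a path of connected nodes, and $\ConnectPtS(v)$ holds for every connected node on such a path; moreover the spurious connected nodes lying on no member--root path are purged through Rules \CRC, \CRE and \CRG. Hence, past some configuration, the predicates $\ConnectS$, $\CParent$ and $\ConnectPtS$ are permanently satisfied along every member--root path, and an inspection of the guards shows that the only rule that can still change the state of a connected node $v$ on such a path is Rule \ARB, whose guard then reduces to $\Better(v)$, i.e.\ to $\dist_v \neq \distConnect(v)$; when it fires it assigns $\dist_v := \distConnect(v)$ and $\parent_v := \parentConnect(v)$ together.

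Next I would analyse the connected subtree, calling a connected node a \emph{connection point} when $\connectpt$ is true; by $\ConnectPtS$ every member, and the root, is a connection point, so following parent pointers from a member reaches --- after finitely many hops, by Lemma~\ref{lem:correct1} --- a first connection point, every node strictly in between being a connected non--connection-point. For a connected node $v$ let $\delta_v$ be the true length of a shortest path from $v$ to a connection point other than $v$ whose interior avoids connection points. The key observation is that $\distConnect(v)$ is precisely the one-step Bellman relaxation of the recursion defining $\delta$ --- the edge weight to a neighbouring connection point, or $\dist_u+w(u,v)$ through a neighbour $u$ that is non-connected or a connected non--connection-point --- while by Lemma~\ref{lem:correct1bis} the $\dist$ value of every non-connected node has already converged to the true distance to the nearest connected node. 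I would then show, by induction on the hop-length of a realising shortest path, that eventually $\dist_v=\delta_v$ and $\parent_v=\parentConnect(v)$ for every connected non-root node $v$. The base case is a connected node adjacent to a connection point. In the inductive step, once all neighbours on some realising path have converged, $v$ reaches the correct value at its next application of \ARB; a value that is too large cannot persist, since then $\Better(v)$ is true and \ARB fires again, and one corrupted to be too small cannot persist either, because following the $\parentConnect$ pointers of the too-small values reaches, by the acyclicity of the parent relation (Lemma~\ref{lem:correct1}), a node whose stored value strictly undershoots its own relaxation, forcing a correction.

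Finally, fix a member $m$ and trace $m=v_0,\; v_1=\parent_{v_0},\;\dots,\;v_k$ up to the first connection point $v_k$. For each $i<k$, after the above convergence $v_{i+1}=\parentConnect(v_i)$, which realises the minimum defining $\distConnect(v_i)$, so $\dist_{v_i}=\dist_{v_{i+1}}+w(v_{i+1},v_i)$ when $i<k-1$ and $\dist_{v_{k-1}}=w(v_k,v_{k-1})$. Summing along the branch, the path $v_0v_1\cdots v_k$ has total weight $\dist_{v_0}=\delta_{v_0}$, which is the shortest distance from $m$ to the rest of the current tree. Thus each member is attached to the current tree by a shortest path, which is the claim.

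The step I expect to be the main obstacle is the self-stabilizing convergence of the distance computation in the second paragraph. One must preclude that corrupted small values of $\dist$ circulate forever, and, more delicately, control the coupling of the two distance recursions: a connected node's $\distConnect$ estimate may route through a non-connected neighbour whose own $\dist$ was computed toward a connected node that is not a connection point, so the induction over connected nodes (toward connection points) and the one behind Lemma~\ref{lem:correct1bis} (non-connected nodes toward connected nodes) have to be interleaved, and one has to argue that such a situation is transient because the connected node so reached would itself become a connection point. One must also ensure that $\parent_v$, not only $\dist_v$, ends up consistent with $\parentConnect(v)$, which is why it matters that every such node actually re-executes Rule \ARB after the structure has stabilized. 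Carrying out this bookkeeping in the asynchronous message-passing model, where a node only sees its neighbours through the periodic $\InfoMsg$ messages, is the delicate part; the cycle-freeness from Lemma~\ref{lem:correct1}, together with the fact that $\Better$ pushes every node back up to its relaxed value, is what ultimately makes the hop-count induction go through.
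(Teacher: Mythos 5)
Your route is genuinely different from the paper's, and it has two real gaps. The paper does not reason about the stabilized configuration as a fixed point of the distance recursions; it argues by induction on the members in the order in which they get connected, mirroring the Imase--Waxman construction: for the member $v_i$ about to connect, Lemma~\ref{lem:correct1bis} gives that the requesting path to the already-built tree $T_{i-1}$ is a shortest path at that moment, Lemma~\ref{lem:correct3} then gives $\ConnectPtS(v_i)$, $\Better(v_i)$ is false so the path is frozen, and the case where a later member creates a connection point inside an older branch is handled by the subpath-of-a-shortest-path argument. This order-sensitive statement (each $v_i$ attached by a shortest path to $T_{i-1}$) is exactly what Theorem~\ref{theo1} consumes via $l(v_i)=\min_{j<i}d(v_i,v_j)$. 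Your static claim does not deliver it, and its final step is moreover unjustified: you equate ``shortest distance to the nearest connection point'' with ``shortest distance from $m$ to the rest of the current tree''. The algorithm does not guarantee the latter in a stable configuration: a member can be adjacent to a connected node $u$ of another branch with $\connectpt_u=false$, and $\distConnect$ charges such a neighbour $\dist_u+w(u,m)$ rather than $w(u,m)$, so $\Better(m)$ stays false and no rule re-routes $m$ even though $w(u,m)$ is smaller than the length of $m$'s branch. What holds in the limit is a shortest path to the tree as it existed when the member connected (equivalently, to connection points), not to the final tree as a whole.

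The second gap is the one you flag yourself: the coupled Bellman induction. The non-connected nodes' $\dist$ converges to the distance to the nearest \emph{connected} node, while $\distConnect$ of a connected node routes through such neighbours as if their $\dist$ pointed to a connection point; your proposal defers the resolution (``one has to argue that such a situation is transient''), and the correction of under-estimated $\dist$ values is only sketched. Since this interleaved convergence is precisely the technical heart of your approach, the proof is incomplete even for the static statement you aim at. A repair along the paper's lines is simpler: work per member in connection order, use Lemma~\ref{lem:correct1bis} at the moment the member is still non-connected (so only the $\distNotConnect$ recursion is involved), and invoke Lemma~\ref{lem:correct3} together with $\Better=false$ and Rule~\ARB\ only to argue that the chosen path, and its subpaths when new connection points appear, remain shortest.
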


\begin{proof}
Let $T_{i-1}$ be the tree constructed by the algorithm before the
connection of the member $v_i$. To prove the lemma, we must show that
for any member $v_i$ we have a shortest path from $v_i$ to $T_{i-1}$
when $\ConnectPtS(v_i)=true$ and $\Better(v_i)=false$
(i.e. Rule \ARB\/ can not be executed by a member and so there is no
better path to connect the member).

Initially, according to Rule \CRA\/ the root $v_0$ is always connected and
we have $\ConnectPtS(v_0)=true$ and
$\Better(v)=false$ (because $\dist_v=0$). We show by induction on
the number of members that the property is satisfied for each
member. At iteration $1$, let $v_1$ be a not connected member then
according to Lemma \ref{lem:correct1bis} the path $P_1$ from $v_1$ to
$v_0$ in the spanning tree is a shortest path, so there is a time
s.t. $\ConnectPtS(v_1)=true$ (see Lemma
\ref{lem:correct3}) since $P_1$ is a shortest path between $v_1$ and
$v_0$ (i.e. $T_0$), we have $\Better(v_1)=false$, thus the
property is satisfied for $v_1$. We suppose that the tree $T_i$
satisfies the desired property for every member $v_j, j \leq i$. At
iteration $i+1$, when member $v_{i+1}$ is not connected, according to
Lemma \ref{lem:correct1bis} the path $P_{i+1}$ from $v_{i+1}$ to $T_i$
is a shortest path, so there is a time
s.t. $\ConnectPtS(v_{i+1})=true$ (see Lemma
\ref{lem:correct3}). Since $P_{i+1}$ is a shortest path between
$v_{i+1}$ and $T_i$, we have $\Better(v_{i+1})=false$ and the
property is satisfied for $v_{i+1}$.

Note that a member $v_{i+1}$ can create a connection point $u$
(i.e. $\connectpt_u=true$) on the path $P_j$ connecting a
member $v_j, j \leq i$. In this case, the property is still satisfied
for $v_j$ because the path between $u$ and $v_j$ is part of $P_j$ so
it is a shortest path since a subpath of a shortest path is a shortest
path. Moreover, when we have $\connectpt_u=true$ for $u$ then
all nodes on the path between $u$ and $v_j$ update their distance with
Rule \ARB\/ (see predicate $\Better$).
\end{proof}

\begin{lemma}
\label{lem:correct5}
Eventually a Steiner tree is constructed.
\end{lemma}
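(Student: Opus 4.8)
The plan is to assemble Lemmas~\ref{lem:correct1}--\ref{lem:correct4} into a single statement about the subgraph $T$ induced by the connected nodes. First I would restrict attention to a suffix of the execution after which all four preceding lemmas hold simultaneously and the $\InfoMsg$ copies are accurate (so each node's view of its neighborhood is correct). In that suffix, define $T=(V_T,E_T)$ by $V_T=\{v:\connect_v=true\}$ and $E_T=\{\{v,\parent_v\}:v\in V_T,\ v\neq\root\}$, and then establish the three defining properties of a Steiner tree in the order: acyclicity, connectivity/rootedness, and $S\subseteq V_T$.

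Acyclicity is immediate, since $T$ is a subgraph of the rooted spanning tree produced by Lemma~\ref{lem:correct1}. For connectivity I would argue that a non-root node $v$ can remain connected only if $\ConnectS(v)$ holds --- otherwise Rule~\CRE\ (or Rule~\CRG\ when $\dist_{\parent_v}=\infty$) is enabled and clears $\connect_v$ --- and that $\ConnectS(v)$ entails $\connect_{\parent_v}=true$; hence following $\parent$ pointers upward from any node of $V_T$ stays inside $V_T$, cannot cycle (by acyclicity), and therefore terminates at $\root$ (which is permanently connected by Rule~\CRA). Thus $T$ is a tree rooted at $\root$. Finally $S\subseteq V_T$ is exactly Lemma~\ref{lem:correct2}. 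Together these give a connected acyclic subgraph of $G$ containing $S$, i.e.\ a Steiner tree by the definition of Section~\ref{sec:model}; and Lemmas~\ref{lem:correct3}--\ref{lem:correct4} additionally certify that each member hangs off $T$ through a shortest path, matching condition~2 of Definition~\ref{def:legitimate_state}.

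I expect the only delicate point to be ruling out spurious connected components or ``dangling'' connected nodes not lying on any member-to-root path: a node $v$ with $\connect_v=true$ whose subtree in the spanning tree contains no member should eventually disconnect. I would handle this by noting that such a $v$ is not a member and, once $\need$ has stabilized via Rules~\CRB/\CRC, receives no connection request (Predicate $\Ask(v)$ is false), so $\ConnectS(v)$ is false and Rule~\CRE\ fires; some care is needed to process these disconnections from the leaves upward so that the $\CParent$ guard and the ``wait'' clause of Rule~\CRG\ do not block progress. Once this pruning terminates, $V_T$ is precisely the accumulation of the shortest member-connection paths, so the structure coincides with the Imase--Waxman construction, which closes the argument.
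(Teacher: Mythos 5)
Your proposal is correct and follows essentially the same route as the paper: combine Lemmas~\ref{lem:correct1} and \ref{lem:correct2} to get an acyclic, root-connected structure containing $S$, and then prune connected branches containing no member via $\Ask$ becoming false, Rule~\CRC\ clearing $\need$, and Rule~\CRE\ clearing $\connect$, arguing by induction from the leaves upward. Your ``delicate point'' about dangling connected nodes is exactly the paper's central step that every leaf of $T$ must eventually be a member, so no genuinely different argument is introduced.
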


\begin{proof}
According respectively to Lemmas \ref{lem:correct1} and
\ref{lem:correct2} a spanning tree is constructed (i.e. $S$ is also
spanned) and there is a path of connected nodes between each member
and the root. To prove the lemma we must show that every leaf of $T$
is a member.

Consider the connected node $v$ (i.e. $\need_v=true$ and
$\connect_v=true$), such that $v$ is a leaf of $T$. Since $v$ is a
leaf, this implies that $v$ has no connected child in $T$, so
predicate $\Ask(v)$ is false.\\ Suppose that $v$ is not a
member. Thus $v$ can execute Rule \CRC\/ and change the value of
$\need_v$ to false. As a consequence predicate $\ConnectS(v)$
is false and $v$ can then execute Rule \CRE\/ which changes the value of
$\connect_v$ to false. Therefore $v$ is not connected and is no more
a leaf of $T$. By using the same scheme we can show by induction on
the height of $T$ that every node on a path of connected nodes which
contains no member nodes can not belong to $T$ after a finite bounded
of time.\\ Now suppose that $v$ is a member, the guard of Rule \CRC\/ is
not satisfied so $\need_v$ remains true. Since $\need_v=true$, predicate
$\ConnectS(v)$ remains true too and $v$ is maintained by the
algorithm as a leaf of $T$.
\end{proof}

\begin{lemma}[Convergence]
\label{lem:convergence}
Starting from an illegitimate configuration eventually the algorithm reaches in a finite time a legitimate
configuration.
\end{lemma}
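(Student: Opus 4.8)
The plan is to assemble Lemmas~\ref{lem:correct1}--\ref{lem:correct5} into a single statement, using the strict four-phase priority ordering of \STT\ to argue that the convergence of each phase is not undone by the later ones. First I would recall that, by Lemma~\ref{lem:correct1}, after a finite prefix of any execution the \parent\ variables describe a rooted spanning tree of $G$ with consistent \level\ values; from that point on Rule \CRA\ is never again enabled, and the structure-resetting disjunct $\neg\CParent(v)$ of Rule \ARA\ is never enabled, so the only way a node can still change \parent\ or \level\ is through a genuine shortest-path improvement detected by the predicate $\Better$.

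Next I would freeze the distance layer. By Lemma~\ref{lem:correct1bis}, each non-connected node eventually holds its true distance to the current set of connected nodes together with a \parent\ pointer realizing it. The delicate point --- and the main obstacle --- is that this ``current set of connected nodes'' is itself still moving: connecting a member along a shortest path (Rule \CRD) and, more subtly, the creation of a new connection point on an already-connected branch (see the proof of Lemma~\ref{lem:correct4}) both change which nodes advertise themselves as connection endpoints, which in turn triggers a wave of distance recomputations through Rule \ARB. I would resolve this by an induction on the number of members already correctly connected, exactly in the spirit of the proof of Lemma~\ref{lem:correct4}: there are only $z=|S|$ members, each is connected once and can create at most a bounded number of new connection points, so the whole cascade of distance updates provoked by these events is finite; once it terminates, no node has $\Better(v)$ true, hence Rules \ARA\ and \ARB\ are disabled and all distances and parent pointers are stable.

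With the distance layer frozen, Lemmas~\ref{lem:correct2}, \ref{lem:correct3} and \ref{lem:correct4} give, respectively: a path of connected nodes from every member to the root, $\ConnectPtS(v)$ true on every such path, and the fact that every member hangs off the tree by a shortest path; the guards of Rules \CRB, \CRC, \CRD\ and \CRF\ that are relevant to these nodes are then all stably satisfied or stably falsified, so these properties persist. Finally, Lemma~\ref{lem:correct5} prunes every non-member leaf (via Rule \CRC\ then Rule \CRE), and this pruning is itself finite, since each removal strictly decreases the number of connected non-member leaves and cannot create a better path for any remaining node; what survives is a tree whose leaf set is contained in $S$, i.e.\ a genuine Steiner tree $T$ spanning $S$.

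Putting the three stabilized layers together, the reached configuration satisfies both clauses of Definition~\ref{def:legitimate_state}: clause~(1) by Lemmas~\ref{lem:correct1}, \ref{lem:correct2} and \ref{lem:correct5}, and clause~(2) by Lemma~\ref{lem:correct4}. Since each invoked lemma guarantees its property after a finite prefix, and the compositions above are finite (finitely many phases, finitely many members, finitely many leaf prunings), the legitimate configuration is reached in finite time, which is precisely the convergence claim. The only real work beyond quoting the earlier lemmas is the termination argument for the distance / connection-point cascade, which is why I would state and prove that finiteness bound with care.
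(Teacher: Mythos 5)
Your proof follows essentially the same route as the paper, which simply invokes the chain of correctness lemmas (the paper cites Lemmas~\ref{lem:correct1}, \ref{lem:correct4} and \ref{lem:correct5}) to conclude that a legitimate configuration is reached in finite time. Your version is more detailed --- in particular the explicit termination argument for the cascade of distance updates and the leaf pruning --- but this only makes explicit what the paper's one-line proof leaves implicit, so it is the same approach.
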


\begin{proof}
Let $C$ be an illegitimate configuration, i.e. $C \not \in
\mathcal{L}$. According to Lemmas \ref{lem:correct1},
\ref{lem:correct4} and \ref{lem:correct5}, in a finite time a
legitimate state is reached for any process $v \in V$. Therefore in a
finite time a legitimate configuration is reached in the network.
\end{proof}

\begin{lemma}[Correction]
The set of legitimate configurations is closed.
\end{lemma}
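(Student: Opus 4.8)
My plan is to establish closure by showing that every legitimate configuration is \emph{terminal}: at such a configuration the guard of every rule (\CRA, \ARA, \ARB, \CRB, \CRC, \CRD, \CRE, \CRF, \CRG) is false at every node. Once this is proved closure is immediate, since the only remaining activity in the system is the periodic sending and delivery of $\InfoMsg$ messages, and in a legitimate configuration the copy each node keeps of its neighbours' variables already agrees with their true values, so delivering such a message leaves every local state unchanged and in particular cannot falsify Definition~\ref{def:legitimate_state}.

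First I would make precise the invariants that a legitimate configuration carries in addition to the two geometric conditions of Definition~\ref{def:legitimate_state}: by Lemmas~\ref{lem:correct1}--\ref{lem:correct5}, the parent pointers form a rooted spanning tree with correct levels (so $\CRoot(v)$ holds at the root and $\CParent(v)$ at every other node), the connected nodes are exactly the nodes of the Steiner tree $T$, every leaf of $T$ is a member, $\need_v$ is \emph{true} exactly for members and their tree-ancestors, $\dist_v$ equals $\distNotConnect(v)$ at a non-connected node and $\distConnect(v)$ at a connected one, and $\connectpt_v$ is \emph{true} exactly at members and at connected nodes with at least two connected children. All of these are functions of $T$ and of the membership predicate, hence are pinned down once conditions~(1)--(2) of Definition~\ref{def:legitimate_state} hold.

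Granting this, the guard check is routine. Rule~\CRA\ is disabled because $\CRoot(v)$ holds at the root. Rules~\ARA\ and~\ARB\ are disabled because $\CParent(v)$ holds everywhere and, every distance being already the correct value, $\Better(v)$ is false both for connected and for non-connected nodes. Rule~\CRB\ is disabled: a member already has $\need_v=true$, and a non-connected non-member has no member descendant, so $\Ask(v)$ is false. Rule~\CRC\ is disabled: a non-connected node with $\need_v=true$ is, by the invariant, a non-member having a member descendant, so $\Ask(v)$ is true. Rule~\CRD\ is disabled because every node that needs connection is already connected, so $\neg\connect_v \wedge \ConnectS(v)$ fails; Rule~\CRE\ is disabled because every connected node satisfies $\ConnectS(v)$; and Rule~\CRG\ is disabled because no node has a parent at infinite distance, a legitimate configuration recording no pending deletion. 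Finally Rule~\CRF\ is disabled because $\ConnectPtS(v)$ holds at every connected node, by Lemma~\ref{lem:correct3} together with the fact that every leaf of $T$ is a member.

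I expect the delicate point to be the second paragraph rather than the guard check: Definition~\ref{def:legitimate_state} asserts only the existence of $T$ and of the shortest connecting paths, so one must argue that, within the variable set of \STT, those two conditions already force the values of $\parent$, $\level$, $\need$, $\connect$ and $\connectpt$ at every node; otherwise a rule touching only an auxiliary boolean (typically \CRF\ or \CRC) could a priori still fire. I would handle this as in the convergence lemmas: a configuration meeting (1)--(2) in which some auxiliary variable disagreed with the value dictated by $T$ would enable the matching rule, whose execution adjusts that variable towards its correct value while leaving (1)--(2) intact, and after finitely many such steps one reaches the fully consistent, terminal configuration that the legitimacy predicate actually designates. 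Hence no execution beginning in a legitimate configuration ever leaves the set of legitimate configurations, i.e. this set is closed.
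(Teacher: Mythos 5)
Your proposal takes a genuinely different, and considerably more substantial, route than the paper's own proof. The paper's argument is a two-sentence appeal to the communication model: \InfoMsg\ messages keep the local copies of neighbour states up to date, hence a legitimate configuration is maintained; no rule guard is ever examined. You instead argue that a (fully consistent) legitimate configuration is \emph{terminal}, by checking that the guard of each of \CRA, \ARA, \ARB, \CRB, \CRC, \CRD, \CRE, \CRF, \CRG\ is false everywhere; under the invariants you list in your second paragraph this guard check is correct, and it yields a stronger and more convincing closure statement than the paper's. You also put your finger on the point the paper silently glosses over: Definition~\ref{def:legitimate_state} only asserts the existence of the tree and of the shortest connecting paths, and says nothing about the values of $\need$, $\connect$, $\connectpt$, $\level$.

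The one weak step is the patch in your last paragraph. It is not true that a configuration satisfying (1)--(2) with a corrupted auxiliary boolean can only fire rules that ``leave (1)--(2) intact'': take a legitimate tree and corrupt $\need_v$ to \emph{false} at a connected non-member interior node $v$ of the Steiner tree. No rule resets $\need_v$ to \emph{true} at a connected node (Rule \CRB\ requires $\neg\connect_v$), but $\ConnectS(v)$ becomes false, so Rule \CRE\ fires and sets $\connect_v:=false$, cutting the connected path from the members below $v$ to the root; condition (1) is then violated until the system re-converges. So closure under the literal reading of Definition~\ref{def:legitimate_state} actually fails, and the correct repair is the one your second paragraph already performs: take the legitimacy predicate to pin down all the variables (the configurations produced by Lemmas~\ref{lem:correct1}--\ref{lem:correct5}), after which your terminality argument completes the proof. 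With that reading your proof is sound, and strictly more informative than the paper's own.
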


\begin{proof}
According to the model, $\InfoMsg$ messages are exchanged periodically
with the neighborhood by all nodes in the network, so $\InfoMsg$
messages maintain up to date copies of neighbor states. Thus starting
in a legitimate configuration the algorithm maintains a legitimate
configuration.
\end{proof}

\section{Correctness and proof in Dynamic setting}


In this section, we consider dynamic networks and we prove that topology changes can be correctly treated by extending our algorithm, given in Figure \ref{fig:algo_dynamic}. Moreover, we show that a passage predicate is satisfied during restabilizing execution of given algorithm.

In the following, we define the topology change events, noted $\varepsilon$, that we must consider:
\begin{itemize}
\item an add (resp. a removal) of a member $v$ ($v$ remains in the network) noted ${\tt add}_v$ (resp. ${\tt del}_v$);
\item an add (resp. a removal) of edge $(u,v)$ in the network noted ${\tt recov}_{uv}$ (resp. ${\tt crash}_{uv}$);
\item an add (resp. a removal) of a neighbor node $u$ of $v$ in the network noted ${\tt recov}_u$ (resp. ${\tt crash}_u$).
\end{itemize}


Algorithm given in Figure \ref{fig:algo_dynamic} completes the self-stabilizing algorithm described in precedent sections and allows to a node $v$ to take into account topology change events.
\begin{small}
\begin{figure}[!ht]
\begin{center}
\fbox{
\begin{minipage}{11.7cm}
\begin{description}
\item[\underline{Do forever:}] send $\InfoMsg_v$ to all $u \in N(v)$
\item[\underline{Upon receipt of $\InfoMsg_u$ from $u$:}]$\newline$
use all the rules to correct the local state of $v$ $\newline$
send $\InfoMsg_v$ to all $u \in N(v)$
\item[\underline{Interrupt Section:}]$\newline$
\textbf{If} $\varepsilon$ is a ${\tt del}_v$ event or ($\varepsilon$ is a ${\tt crash}_{uv}$ or ${\tt crash}_u$ event and $\parent_v=ID_u$) \\ \textbf{then}
\hspace*{0.1cm}$\connect_v:=false;$ $\dist_v:=\infty;$ $\connectpt_v:=false; $\\
\hspace*{0.8cm}send $\InfoMsg_v$ to all $u \in N(v)$ $\newline$
\hspace*{0.8cm}\textbf{wait until} $(\not \exists u \in N(v), \parent_u=ID_v \wedge \connect_u);$
\end{description}
\end{minipage}
}
\end{center}
\caption{Algorithm describing message exchanges and treatment of topology change events.}
\label{fig:algo_dynamic}
\end{figure}
\end{small}
In the sequel we suppose that after every topology change the network remains connected. We prove in the next subsection that algorithm of Figure \ref{fig:algo_dynamic} has a superstabilizing property.

\subsection{Correctness under restricted dynamism}

We provide below definitions of the topology change events class $\Lambda$ and passage predicate for protocol given in Figure \ref{fig:algo_dynamic}.

\begin{definition}[Class $\Lambda$ of topology change events]
 ${\tt del}_v, {\tt crash}_{uv}$ and ${\tt crash}_v$ compose the class
 $\Lambda$ of topology change events.
\end{definition}

\begin{definition}[Passage predicate]
\label{def:passage_predicate}
Parent relations can be modified for nodes in the subtree connected by
the removed member, edge or node, and parent relations are not changed
for any other node in the tree.
\end{definition}

%


\begin{lemma}
\label{lem:dyn_suppression}
Starting from a legitimate configuration, if a
member $x$ leaves the set of members $S$ or node $x$ or edge $(y,x)$
is removed from the network then each connected node $v$ in the
subtree of $x$ is disconnected from the tree and a legitimate
configuration is reached by the system.
\end{lemma}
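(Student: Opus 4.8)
The plan is to prove Lemma~\ref{lem:dyn_suppression} in two parts: first that every connected node in the subtree rooted at $x$ becomes disconnected, and second that once the subtree has been cleared the self-stabilizing machinery of Section~\ref{sec:correction} drives the system back to a legitimate configuration. The starting point is that we begin from a legitimate configuration, so by Definition~\ref{def:legitimate_state} we have a genuine rooted spanning tree (Lemma~\ref{lem:correct1}) in which $x$ occupies a well-defined position, and the interrupt section of Figure~\ref{fig:algo_dynamic} fires exactly for the event in the hypothesis: either $x$ itself on a ${\tt del}_x$, or $x$ on a ${\tt crash}_x$/${\tt crash}_{yx}$ event with $\parent_x = ID_y$. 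In each case $x$ sets $\connect_x := false$, $\dist_x := \infty$, $\connectpt_x := false$ and broadcasts $\InfoMsg_x$, then blocks until no child still claims it as a connected parent.

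First I would argue the subtree-clearing by induction on the height of the subtree of $x$ in the (legitimate, hence acyclic) spanning tree. For the base of the propagation: once $x$ has $\dist_x = \infty$ and $\connect_x = false$, every connected child $v$ of $x$ sees, after the next $\InfoMsg$ exchange, that $\ConnectS(v)$ is false (its parent is no longer connected) while $\CParent(v)$ still holds and $\dist_{\parent_v} = \dist_x = \infty$; this is exactly the guard of Rule~\CRG. So $v$ executes \CRG, sets $\connect_v := false$, $\dist_v := \infty$, $\connectpt_v := false$, broadcasts, and waits on its own children. Note \CRE\ cannot be used here because its guard requires $\dist_{\parent_v} \neq \infty$, so the $\dist = \infty$ marker is precisely what forces the disconnection to cascade downward rather than letting $v$ reconnect through a sibling subtree. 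Applying the same argument inductively to each connected node in the subtree, after finitely many $\InfoMsg$ rounds every such node has $\connect = false$. At that point each ``wait until'' condition is satisfied in bottom-up order, so all the blocked nodes (including $x$ itself) are released.

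For the second part, once the subtree is cleared the configuration is simply some (possibly non-legitimate) configuration of the static algorithm on the current network, which by hypothesis is still connected with a single root (the root is a member, hence not the removed $x$ unless $x$ was the root; if $x$ was the root, the leader oracle elects a new one and Rule~\CRA\ reinitializes it). By the Convergence Lemma~\ref{lem:convergence} the algorithm reaches a legitimate configuration in finite time: distances restabilize by Lemmas~\ref{lem:correct1bis}, connected paths to members by Lemma~\ref{lem:correct2}, shortest-path connections and connection points by Lemmas~\ref{lem:correct3}--\ref{lem:correct4}, and non-member leaves are pruned by Lemma~\ref{lem:correct5}. So it remains only to observe that the disconnected subtree nodes, now with $\dist = \infty$ and $\connect = false$, are ordinary not-connected nodes that will recompute finite distances via Rule~\ARA\ as soon as a neighbor offers a finite path, which exists by connectivity.

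The main obstacle I expect is the interplay between the blocking ``wait until'' statements and the asynchronous $\InfoMsg$ model: I need to be careful that the downward cascade of \CRG\ actually terminates and that no node is permanently stuck waiting — this requires arguing that the subtree is finite and acyclic (from the legitimate start, via Lemma~\ref{lem:correct1}), that \CRG\ strictly makes progress (each invocation disconnects one more node and the set of connected nodes in the subtree only shrinks), and that a node's wait condition becomes true once all its children have disconnected, which happens in finite time by the induction. A secondary subtlety is ruling out that a node in the subtree ``escapes'' disconnection by reattaching to a connected node outside the subtree before the $\dist = \infty$ wave reaches it; the guard structure of \CRD\ (requires $\ConnectS(v)$, i.e. a connected parent and the $\need$/$\Ask$ conditions) together with the fact that such reattachment is still a legitimate move of the static algorithm means even if it happens it does no harm — the worst case is simply that the node is re-pruned by \CRC/\CRE\ later, which is covered by the convergence argument.
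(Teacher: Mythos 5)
Your proposal is correct and takes essentially the same route as the paper's proof: the interrupt handler plus Rule \CRG\ propagate $\connect=false$, $\dist=\infty$ down the subtree (your inline induction on the subtree height is exactly the content of the paper's separate Lemma~\ref{lem:no_deadlock}), after which the static-setting lemmas (Lemmas~\ref{lem:correct1}--\ref{lem:correct5}, i.e.\ the convergence argument) restore a legitimate configuration. The only slip is cosmetic: on a ${\tt crash}_x$ event (removal of node $x$ itself) the interrupt fires at each child $v$ with $\parent_v=\id_x$ rather than at $x$, which does not affect your cascade argument.
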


\begin{proof}
According to the description of the complete algorithm, when a member $x$ leaves the set of members $S$ then $x$ changes first its variables as following: $\connect_x=false$ and $\dist_x=\infty$, then $x$ sends its state to its neighborhood and finally $x$ waits until it has no connected child. In the same way, if a node $x$ (resp. edge $(y,x)$ (assume $\parent_x=ID_y$)) is removed from the network then each child $v$ of $x$ (resp. $x$) changes first its variables as following: $\connect_v=false$ and $\dist_v=\infty$ (resp. $\connect_x=false$ and $\dist_x=\infty$), then $v$ (resp. $x$) sends its state to its neighborhood and finally $v$ (resp. $x$) waits until it has no connected child.

When a connected child $u$ of $v$ (resp. of $x$) receives message $\InfoMsg_v$ from $v$ (resp. $\InfoMsg_x$ from $x$), since predicate $\ConnectS(u)$ is false (because $\connect_{parent_u}=false$) and $\dist_{parent_u}=\infty$ the node $u$ executes Rule \CRG\/ changing the variables of $u$ like $v$'s or $x$'s variables, sends its state to its neighborhood and waits until it has no connected child. According to Lemma \ref{lem:no_deadlock}, no node in the subtree of $x$ executing Rule \CRG\/ perpetually waits it has no connected child. As a consequence, after a finite time every connected node $v$ in the subtree of $x$ is no more connected.

Since each node in the subtree of $x$ is not connected, there is at least one of those nodes $v$ such that predicate $\Better(v)$ is true. Thus $v$ can execute Rule \ARA\/. According to Lemmas \ref{lem:correct1} and \ref{lem:correct1bis}, there is a time after which each node in the subtree of $x$ knows its correct shortest path distance to a connected node. Moreover, by Lemmas \ref{lem:correct2} and \ref{lem:correct4} each not connected member will be connected by a shortest path to a connected node in the existing Steiner tree. Therefore, in a finite number of steps the system reaches a legitimate configuration $C' \in \mathcal{L}$.
\end{proof}

\begin{lemma}
\label{lem:2}
The proposed protocol is superstabilizing for the class $\Lambda$ of
topology change events, and the passage predicate (Definition
\ref{def:passage_predicate}) continues to be satisfied while a
legitimate configuration is reached.
\end{lemma}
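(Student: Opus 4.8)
The plan is to combine the two ingredients already assembled: the self-stabilization of the base algorithm (Lemmas~\ref{lem:correct1}--\ref{lem:convergence} plus the closure lemma) and the containment behavior of the disconnection mechanism (Lemma~\ref{lem:dyn_suppression}). First I would recall that superstabilization with respect to $\Lambda$ requires exactly two things: (i) the protocol is self-stabilizing, and (ii) along any execution that starts in a legitimate configuration and contains a single topology-change event of type $\Lambda$, every configuration satisfies the passage predicate of Definition~\ref{def:passage_predicate}. Part~(i) is immediate from Lemma~\ref{lem:convergence} and the closure lemma, since the rules of Figure~\ref{fig:algo_dynamic} strictly extend the base algorithm and the interrupt section is only triggered by a topology change, not during an ordinary execution; hence the self-stabilization argument of Section~\ref{sec:correction} still applies verbatim to reach a legitimate configuration.

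For part~(ii), let $e$ be an execution starting in a legitimate configuration $C_0$ and containing a single event $\varepsilon \in \{{\tt del}_x, {\tt crash}_{yx}, {\tt crash}_x\}$. Before $\varepsilon$ the configuration is legitimate and no rule is enabled that changes a parent pointer, so the passage predicate holds trivially on the prefix. At the step where $\varepsilon$ fires, the interrupt section is executed only at the single node affected by $\varepsilon$ (the departing member $x$, or the node $x$ whose parent edge/parent node disappeared), and it modifies only $x$'s booleans and $\dist_x$, not any parent variable. After that, I would argue by induction on the execution that the only nodes that ever execute Rule~\CRG\ (the only rule that, together with \ARA, can be triggered by the $\dist=\infty$ wavefront and thereby alter a parent relation downstream) are nodes lying in the subtree of $x$ at the moment of $\varepsilon$: a node $u$ enters Rule~\CRG\ only when $\dist_{\parent_u}=\infty$ and $\neg\ConnectS(u)$, and by Lemma~\ref{lem:dyn_suppression}'s analysis the value $\infty$ propagates strictly along child links from $x$ downward, so $\parent_u$ must already be a node of $x$'s subtree; moreover nodes outside this subtree keep $\ConnectS$ true and $\dist_{\parent}\neq\infty$, so none of Rules~\ARA, \CRE, \CRG\ is enabled at them and their parent variables are frozen until a legitimate configuration is restored. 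Thus every configuration of $e$ has its parent relation unchanged outside the subtree connected through the removed member/edge/node, which is exactly the passage predicate; and by Lemma~\ref{lem:dyn_suppression} a legitimate configuration is eventually reached, at which point the passage predicate continues to hold because no further parent change occurs.

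The main obstacle is the inductive containment claim: showing that the $\dist=\infty$ ``disconnection wave'' cannot leak out of $x$'s original subtree and force a parent change at an unaffected node. The delicate points are (a) that a node $u$ outside the subtree, upon receiving an $\InfoMsg$ from a now-disconnected neighbor, does not see $\Better(u)$ become true in a way that makes it switch parents --- here one uses that $\distNotConnect$ and $\distConnect$ ignore the $\infty$ distance of disconnected-below neighbors in favor of the still-valid path through its unchanged parent, so $u$'s distance and parent are left untouched; and (b) that the \textbf{wait until} clauses in Rule~\CRG\ and in the interrupt section do not deadlock, which is precisely what Lemma~\ref{lem:no_deadlock} (invoked in Lemma~\ref{lem:dyn_suppression}) provides. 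Once containment is established, the passage predicate falls out directly and the rest is a citation of the already-proven self-stabilization results.
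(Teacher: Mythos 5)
Your proposal is correct and follows essentially the same route as the paper's proof: nodes outside the affected subtree never see \Better\ satisfied (their distances, hence parents, are frozen), while nodes inside are disconnected by the interrupt section and Rule \CRG\ as analyzed in Lemma~\ref{lem:dyn_suppression} (with Lemma~\ref{lem:no_deadlock} guaranteeing the waits terminate), so only they may change parent before a legitimate configuration is reached. If anything, you are slightly more complete than the paper, which explicitly splits on tree versus non-tree edge and node crashes but never treats the ${\tt del}_v$ case and leaves the self-stabilization half implicit, whereas you state both; note, though, that your containment claim for ${\tt del}_v$ (that the dismantling of the branch above the departed member cannot trigger parent changes at its ancestors) is asserted at the same level of rigor as the paper's corresponding assertion, not proved in more detail.
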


\begin{proof}
Consider a configuration $\Delta \vdash \mathcal{L}$. Suppose $\varepsilon$ is a removal of edge $(u,v)$ from the network. If $(u,v)$ is not a tree edge then the distances of $u$ and $v$ are not modified neither $u$ nor $v$ changes its parent, thus no parent relation is modified. Otherwise let $\parent_v=u$, $u$'s distance and $u$'s parent are not modified, it is true for any other node not contained in the subtree of $v$ since the distances are not modified (i.e. predicate $\Better$ is not satisfied). However, $u$ is no more a neighbor of $v$ so according to the handling of an edge removal by the algorithm $v$'s variables are reseted. Then $v$ sends its state to its neighborhood and waits until it has no connected child. According to Lemma \ref{lem:dyn_suppression}, all its children will become not connected and eventually change their parent by executing Rule \ARA\/ because there is a better path (i.e. predicate $\Better$ is satisfied). Therefore, only any node in the subtree connected by the edge $(u,v)$ may change its parent relation.

Suppose $\varepsilon$ is a removal of node $u$ from the network. Any node not contained in the subtree of $u$ do not change its parent relation because the distances are not modified (i.e. predicate $\Better$ is not satisfied). Consider each edge $(u,v)$ between $u$ and its child $v$, we can apply the same argument described above for an edge removal. Therefore, only any node contained in the subtree connected by $u$ may change its parent relation.
\end{proof}

A fault which occurs in the network is detected using a distance with an infinity value. To handle a fault, we introduce Rule \CRG\/ to bootstrap connected nodes in the subtree below a faulty node/edge. We show in Lemma \ref{lem:no_deadlock} that even Rule \CRG\/ is executed when no fault occurs in the network then no node perpetually waits (no deadlock) because of Rule \CRG.

\begin{lemma}
Starting from an arbitrary configuration, Rule \CRG\/ introduces no deadlock in the network.
\label{lem:no_deadlock}
\end{lemma}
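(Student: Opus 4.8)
The plan is to show that the waiting condition introduced by Rule \CRG\ — ``wait until $(\not\exists u\in N(v), \parent_u=\id_v \wedge \connect_u)$'' — is eventually satisfied for every node that ever executes \CRG, so that no node waits forever. I would argue by a well-founded descent along the tree level structure established in Lemma~\ref{lem:correct1}: a node $v$ blocked on \CRG\ is waiting for its connected children to disconnect, those children are (after the parent-pointer/level structure has stabilized) at strictly larger level, and the set of levels present in the finite network is bounded, so an innermost blocked node has no connected child blocking it and can proceed.

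More concretely, I would proceed as follows. First, fix the configuration in which some node $v$ has just set $\connect_v:=false$, $\dist_v:=\infty$, $\connectpt_v:=false$ via \CRG\ (or via the interrupt section) and entered its wait. Observe that while $v$ is waiting it does not execute any further rule, but it keeps sending $\InfoMsg_v$, so every neighbor eventually sees $\connect_v=false$ and $\dist_v=\infty$. Second, consider any child $u$ of $v$ with $\parent_u=\id_v$ and $\connect_u=true$. Once $u$'s local copy of $v$'s state is refreshed, $\ConnectS(u)$ is false (since $\connect_{\parent_u}=false$) and $\dist_{\parent_u}=\infty$, so the guard of \CRG\ becomes true for $u$ while the guard of \CRE\ does not (it requires $\dist_{\parent_v}\neq\infty$); hence $u$ eventually executes \CRG\ itself, sets $\connect_u:=false$, and begins waiting on its own children. (If $\CParent(u)$ happens to be false, then $u$ first resets via \ARA, which also sets $\connect_u:=false$; either way $u$ becomes disconnected.) Third, set up the termination argument: after the parent relation has stabilized into the rooted spanning tree of Lemma~\ref{lem:correct1}, a node and its children differ by one in $\level$, and $\level$ values lie in a finite range bounded by $|V|$. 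Take a node $v^\ast$ in the subtree below the fault (or the node/member that triggered the interrupt) that is blocked on \CRG\ and has maximal $\level$ among all currently blocked nodes; by maximality none of its children is blocked on \CRG, and by the previous step every connected child of $v^\ast$ will either execute \CRG\ (and then must itself be blocked, contradicting maximality — so in fact it disconnects and immediately finds no connected child, since its own children are deeper still) or reset via \ARA; in all cases $v^\ast$ eventually sees no connected child and leaves the wait. Removing $v^\ast$ from the blocked set and repeating, the blocked set shrinks to empty in finitely many steps.

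The main obstacle, and the place I would be most careful, is the interaction between \CRG\ and corrupted initial configurations where the parent/level structure is \emph{not yet} the clean rooted spanning tree: a priori a cycle in the $\parent$ pointers could let a node wait for a ``child'' that is in fact its own ancestor, and the $\dist=\infty$ flag could propagate around the cycle. I would handle this by invoking Lemma~\ref{lem:correct1} to note that any such cycle is broken in finite time by \ARA\ (because $\CParent$ fails on it), and that a node executing \ARA\ sets $\connect_v:=false$, which both releases it from any \CRG-wait it might have been in and removes it as a blocking child of its former parent. Thus within finite time we are in a configuration with an acyclic parent forest, on which the level-based descent above applies; combining the two phases gives the claim that \CRG\ introduces no deadlock from an arbitrary configuration. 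One should also note the benign case flagged in the lemma's surrounding text — \CRG\ may fire when \emph{no} fault occurred (e.g. a spurious $\dist_{\parent_v}=\infty$ from corruption); the same descent argument covers it, since once the finitely many spurious $\infty$ distances are corrected by \ARA/\ARB no new \CRG\ executions are triggered.
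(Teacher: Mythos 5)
Your argument is essentially the paper's: the paper likewise observes that a waiting node's connected children, once they see $\connect=false$ and $\dist=\infty$ at their parent, execute Rule \CRG\ (thereby disconnecting immediately) and then concludes by induction on the finite height of the connected subtree rooted at the waiting node that the wait terminates, which is the same bottom-up descent you phrase via maximal \level\ among blocked nodes. Your additional treatment of parent cycles in arbitrary configurations (broken by \ARA\ via Lemma~\ref{lem:correct1}) is a refinement the paper leaves implicit, but it does not change the route.
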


\begin{proof}
Consider a configuration which simulates the presence of a fault in the network (but there is not really a fault) and allows the execution of Rule \CRG\/ by a node $v$, i.e. $v$ is a connected node and has a not connected parent $u$ with $\dist_{\parent_v}=\infty$. According to Rule \CRG, $v$ becomes a not connected node and sets its distance to infinity (i.e. $\connect_v=false$ and $\dist_v=\infty$), then it sends its state to its neighbors and waits until it has no connected child. There are two cases: (1) $v$ has no connected child or (2) $v$ has at least one connected child. In case (1), $v$ is a leaf of the connected subtree and does not wait. Otherwise, in case (2) the subtree of connected nodes rooted in $v$ has a finite height so we can show by induction that in a finite time every node in the subtree executes Rule \CRG\/. According to case (1), there is no deadlock for the leaves of the connected subtree. Therefore, we can show by induction on the height of the subtree rooted in $v$ that after a finite time there is no connected node and $v$ wakes up.
\end{proof}

\subsection*{Correctness under fully dynamism  assumptions}

In the precedent subsection guarantees are given on the conservation of the tree structure, only for removal topology events. Here, we consider all the different topology change events presented in Section \ref{sec:correction} (i.e. add/removal of members, nodes or edges). We must maintain a quality of service on the weight of the structure reserved to interconnect all members. Therefore, legitimate configurations take into account a global constraint on the Steiner tree weight. As a consequence, we can not give any guarantees on the tree structure during the stabilization of protocol defined by the presented rules and algorithm of Figure \ref{fig:algo_dynamic} (i.e. no passage predicate is satisfied) if an add of a member, node or edge arises in the network. However to maintain a quality of service on the structure weight, we show here that the protocol is able to restabilize when one of the previous mentioned topology change events arises in the network.

Lemma \ref{lem:dyn_suppression} proves that a legitimate configuration is reached starting from an arbitrary configuration if removal topology change events arises in the network. The following lemma considers add topology change events and shows that a legitimate configuration is reached too.

\begin{lemma}
\label{lem:dyn_ajout}
Starting from a legitimate configuration, after a
member add to $S$ or a node or edge add in the network, eventually the
algorithm leads in a finite number of steps to a legitimate
configuration.
\end{lemma}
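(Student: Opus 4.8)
The plan is to reduce the three add-events to cases already handled by the static convergence argument (Lemmas \ref{lem:correct1}--\ref{lem:correct5}), using the observation that an addition cannot create the pathological configurations (cycles in the parent relation, infinite distances) that the self-stabilizing rules are designed to repair; it only perturbs distances and the $\need/\connect/\connectpt$ bookkeeping, which the same rules restabilize in finite time. Concretely I would start from a legitimate configuration $\Delta$ (so all the invariants of Definition \ref{def:legitimate_state} hold) and analyze each event in $\{{\tt add}_v, {\tt recov}_{uv}, {\tt recov}_u\}$ separately, arguing in each case that the post-event configuration is one from which the static correctness lemmas apply unchanged.

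First I would treat ${\tt add}_v$ (a node $v$ joins $S$). The only change is $\member_v$ flips from \emph{false} to \emph{true}. If $v$ was already connected and is now a member, then by Rule \CRF\/ it will set $\connectpt_v := true$, and $\ConnectS$ stays true along its path, so nothing else moves. If $v$ was not connected, then $\neg\need_v$ together with $\member_v$ enables Rule \CRB, so $\need_v$ becomes \emph{true}; the request then propagates up the spanning tree exactly as in the proof of Lemma \ref{lem:correct2} (the parent of $v$ sees $\Ask$ true, etc.), reaching a connected node, and the acknowledgment propagates back via Rule \CRD, connecting $v$ by a path that Lemma \ref{lem:correct1bis} guarantees is a shortest path to the current tree. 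Then Lemma \ref{lem:correct4} gives that $v$ is connected by a shortest path, and Lemma \ref{lem:correct5} prunes any now-superfluous non-member leaves. Hence a legitimate configuration is reached.

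Next I would treat ${\tt recov}_{uv}$ and ${\tt recov}_u$ together. An edge (or node) appearing in the network can only \emph{decrease} some shortest-path distances $d(\cdot,\cdot)$; it never disconnects anything, never makes a variable infinite, and never creates a parent cycle (the new node $u$ starts with, say, a reset state but not in the subtree of anyone, and the underlying neighbor-set protocol updates $N(\cdot)$ at the endpoints). So after the event the spanning-tree structure of Lemma \ref{lem:correct1} is still eventually valid, and some nodes now have $\Better$ true because $\distNotConnect$ or $\distConnect$ has dropped; they execute Rule \ARA\/ or \ARB\/ to adopt the shorter path and update $\parent$ and $\level$ accordingly, after which Lemmas \ref{lem:correct1bis}--\ref{lem:correct5} run verbatim to re-establish that every member is linked to the root by a shortest path and every leaf is a member. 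Invoking Lemma \ref{lem:convergence} (or re-deriving it from the cited lemmas) then yields a legitimate configuration in finitely many steps.

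The main obstacle I expect is the finiteness/termination bookkeeping rather than the structural argument: when a recovery lowers distances, the cascade of Rule \ARB\/ executions along a branch toward a connection point (and the possible creation of new connection points via Rule \CRF) could, in an asynchronous model, be argued to loop if one is not careful. I would handle this by a potential-function argument — each application of \ARA/\ARB\/ strictly decreases the multiset of $(\dist_v)$ values ordered lexicographically (distances only decrease and are bounded below by $0$), and the $\need/\connect/\connectpt$ flags only change monotonically between successive distance changes — so the number of rule executions triggered by a single add-event is finite, exactly as in the convergence proof. Combining the finiteness of this transient with the three case analyses gives the lemma.
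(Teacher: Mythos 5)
Your proposal is correct and follows the same high-level strategy as the paper -- reduce each add event to the static convergence lemmas by observing that an addition only triggers Predicate $\Better$ (hence Rules \ARA\//\ARB) or the $\need$/$\connect$/$\connectpt$ machinery, after which Lemmas \ref{lem:correct1bis}--\ref{lem:correct5} restabilize the tree -- but your case decomposition differs from the paper's. The paper splits into an edge add, a node add, and a path add: for the edge it argues that either $\Better$ is false at both endpoints (the configuration stays legitimate) or \ARA\//\ARB\/ correct the distances; for the node add it has the new node correct itself via \ARA\/ and then invokes Lemmas \ref{lem:correct2}--\ref{lem:correct4}; the path add is reduced to combinations of the first two. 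Notably, the paper's proof never explicitly treats the ${\tt add}_v$ event in which an \emph{existing} node joins $S$, whereas you handle it head-on (Rule \CRF\/ if $v$ is already connected, Rules \CRB\//\CRD\/ and Lemmas \ref{lem:correct1bis}, \ref{lem:correct4}, \ref{lem:correct5} otherwise), which actually covers a case stated in the lemma that the paper glosses over; conversely, you omit the path-add case, though it reduces to your node/edge cases just as in the paper. Your additional potential-function sketch for termination addresses a finiteness issue the paper passes over silently ("after a finite number of steps"), though as stated it needs care: when a new connection point appears, $\distConnect$ values along a branch are recomputed and the lexicographic-decrease claim should be argued relative to the fixed post-event shortest-path distances rather than asserted outright. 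On balance, your argument is a legitimate variant that is more explicit on the member-add case and on termination, while the paper's is terser and organized around the three topological additions.
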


\begin{proof}
We must consider three cases: an edge add, a node add and the add of a path in the network.

Consider the add of an edge between two existing nodes $u$ and $v$
with a weight $w(u,v)$. If predicate $\Better$ is false for $u$
and $v$ (i.e. $\dist_u \leq \dist_v + w(u,v)$ and $\dist_v \leq \dist_u +
w(u,v)$) then the system is still in a legitimate configuration $C'
\in \mathcal{L}$. Otherwise $\Better$ is true and Rule \ARA\/
(resp. \ARB\/) can be executed if $u$ or $v$ is not connected
(resp. connected) to correct its distance. In the same way, other tree
nodes $u$ or $v$ correct their distances, thus after a finite number
of steps the system reaches a legitimate configuration $C' \in
\mathcal{L}$.

Consider the add of a node $v$ to an existing node $u$ by an edge
$(u,v)$. $v$ corrects its variables by executing Rule \ARA\/. If $v$ is
not a member, variable $\need_v$ is corrected if necessary with Rule \CRC\/
otherwise according to Lemmas \ref{lem:correct2}, \ref{lem:correct3}
and \ref{lem:correct4} $v$ is connected by a shortest path to the
existing tree, which leads the system to a legitimate configuration
$C' \in \mathcal{L}$.

Consider the add of a path $P$. If $P$ is a path between an existing
node $u$ and a new node $v$ then all nodes of $P$ behave like the case
of a node add $v$ to an existing node $u$. Otherwise $P$ is a path
between two existing nodes $u$ and $v$, all nodes of $P$ behave like
the case of a node add to an existing node and $u$ and $v$ behave like
the case of an edge add if $P$ offers a better path. Thus, in a finite
number of steps the system reaches a legitimate configuration $C' \in
\mathcal{L}$.
\end{proof}

\section*{Complexity and Cost Issues}
\begin{theorem}
Using the notation of Theorem~\ref{theo1}, 
Algorithm \STT\/ performs in $O(D\cdot |S|)$ rounds where $D$
 is the current diameter of the
network. It uses $O(\Delta \log n)$ bits of memory in
the send/receive model\footnote{In the classical message passing model
the memory complexity is $O(\log |S|)$}, where $\Delta$ is the current maximal degree of the
network.
\end{theorem}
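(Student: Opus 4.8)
The plan is to treat the two claims separately, reusing the stabilization lemmas of Section~\ref{sec:correction} for the round bound and a direct inspection of the node state for the memory bound. For the round complexity, I would first bound the time for the underlying structure to settle: by Lemma~\ref{lem:correct1} the parent relation becomes a rooted spanning tree once every cycle is broken, and a cycle disappears within a number of rounds proportional to its length because the node of locally minimal \level{} in the cycle gets enabled for Rule~\ARA; once no cycle remains, the \level{} field, reset to $\level_{\parent}+1$ by Rules~\ARA, \ARB{} and \CRA, propagates outward from the unique root, so after $O(D)$ rounds every node sits at shortest-path depth. I would then observe that each of the four phases of \STT{} advances by one hop per round: the \dist{} field stabilizes in $O(D)$ rounds (Lemma~\ref{lem:correct1bis}), the connection request carried by \need{} travels a path of length at most $D$ (Lemma~\ref{lem:correct2}), the acknowledgement carried by \connect{} does likewise, and the connection-point update of Rule~\CRF{} followed by the \dist{} recomputation of Rule~\ARB{} along the newly connected branch again costs $O(D)$ rounds (Lemma~\ref{lem:correct3}).

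Next, using the Imase--Waxman scheme implemented by the algorithm (Lemma~\ref{lem:correct4}), I would argue that the members get connected one after another: attaching the $i$-th member to the current tree $T_{i-1}$ triggers exactly one pass through the four phases, i.e., $O(D)$ rounds, plus at most the creation of one new connection point on a previously built branch, which only re-triggers Rule~\ARB{} distance updates confined to that branch --- again $O(D)$ rounds. Summing over the $|S|$ members gives $O(D\cdot|S|)$, which absorbs the $O(D)$ cost of the initial spanning-tree and distance stabilization. I expect the delicate point to be exactly this accounting of the interleaving: showing that connecting a new member cannot start a cascade of re-connections or re-distances that breaks the $O(D)$ per-member budget, and bounding the initial cycle-breaking phase when the corrupted \level{} values are arbitrarily large; I would handle the latter with a potential argument on the set of nodes whose \dist{} (or \level{}) is still incorrect, a set that retreats toward the root at rate one hop per round.

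For the memory bound the argument is routine. Each node stores the three integers $\parent$ (a node identifier, $O(\log n)$ bits), \level{} (at most $n$, $O(\log n)$ bits) and \dist{} (a sum of at most $n$ polynomially bounded edge weights, $O(\log n)$ bits) together with the four booleans \member, \need, \connect{} and \connectpt; this is $O(\log n)$ bits for the algorithm proper. In the fine-grained send/receive model a node additionally caches the six fields of $\InfoMsg_v[u]$ for each of its $\Delta$ neighbors, which adds $O(\Delta\log n)$ bits and dominates, yielding the stated $O(\Delta\log n)$; in the classical message-passing model the incoming $\InfoMsg$ messages are consumed on the fly rather than retained, leaving only the $O(\log n)$ --- or, as the footnote observes, $O(\log|S|)$ --- bits of the node's own fields.
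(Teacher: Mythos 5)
Your proposal follows essentially the same route as the paper: a per-member budget of $O(D)$ rounds (shortest-path/distance computation plus propagation of the connection along the chosen path), an induction summing this over the $|S|$ members, and the same memory count in which the node's own $O(\log n)$-bit fields are dominated by the $O(\Delta\log n)$ bits of cached neighbor states in the send/receive model. The delicate points you flag (cascading reconnections and cycle-breaking from arbitrary corrupted \level{} values) are in fact not treated in the paper's proof either, which simply assumes worst-case reconstruction and views each connection step as a shortest-path-tree computation from the current tree acting as a single virtual root, charging $3D$ rounds per member.
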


\begin{proof}
We consider the worst case in which all the tree must be reconstructed
because of topological or member set modifications. Let
$T_i=(V_{T_i},E_{T_i})$ be a tree constructed at some step $i$ of the
algorithm. Our algorithm can be viewed as a special case of a shortest
path tree construction in which all nodes $v \in V_{T_i}$ are
considered as a single virtual root and all nodes $v \not \in V_{T_i}$
computes the shortest distance from this virtual root. So we can show
by induction that the algorithm connects in at most $O(D)$ rounds the
nearest member to the tree $T_i$. Initially when the root $r$ is
stabilized and connected to $T_0$, $r$ initiates a classic shortest
path computation. So after $3D$ rounds the algorithm connects the
nearest member to the root (we need at most $D$ rounds to compute the
shortest path to the root and at most $2D$ rounds for the nodes on the
path to change their states from not connected to connected). We
assume that following the first $3iD$ rounds $i$ members are connected
to the tree $T_i$. We prove that after $3D$ additional rounds $i+1$
members are connected. In at most $D$ rounds all nodes $v \not \in
T_i$ compute their shortest path to $T_i$, in additional $2D$ rounds
all nodes on the path from the nearest member $v \not \in T_i$ to
$T_i$ change their state to connected. So after $3(i+1)D$ rounds $i+1$
members are connected in tree $T_{i+1}$. Thus as $0 \leq i \leq z$ the
algorithm connects all members in at most $O(zD)$ rounds.

In the following we analyze the memory complexity of our solution.
Each node maintains a constant number of 
local variables of size $O(\log n)$ bits. However, due 
to specificity of our model (the send/receive model) the memory 
complexity including the copies of the local neighborhood is
$O(\delta \log n)$ where $\delta$ is the maximal degree of the network.
\end{proof}

 Since we use the shortest distance metric between nodes in the network, any network can be represented by a complete graph so the following Lemma can be applied.

\begin{lemma}[Imase and Waxman \cite{ImaseWaxman91}]
\label{lem:ImaseWaxman91}
Let $G=(V,E)$ be a complete graph with a cost function $C:E \rightarrow \mathbb{R}^+$ satisfying the triangle inequality, and let $S$ be any nonempty subset of $V$ with $|S|=z$. If $2P$ is the cost of an optimal tour for $S$ and $l:V \rightarrow \mathbb{R}^+$ satisfying the following conditions:
\begin{enumerate}
\item $d(u,v) \geq \min(l(u),l(v))$ for all nodes $u,v \in S$, and
\item $l(v) \leq P$ for all nodes $v \in S$,
\end{enumerate}
then $(\sum_{v \in S}l(v))-\max_{v \in S} l(v) \leq (\lceil \log z \rceil)P$.
\end{lemma}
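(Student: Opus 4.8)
The plan is to strip the optimal tour down to the only two properties we actually need: its cost is $2P$, and --- by shortcutting along the triangle inequality --- the optimal tour on \emph{any} subset of $S$ costs no more than $2P$. First I would relabel $S=\{v_1,\dots,v_z\}$ so that $l(v_1)\ge l(v_2)\ge\dots\ge l(v_z)$; then $\max_{v\in S}l(v)=l(v_1)$, and the inequality to prove becomes $\sum_{i=2}^{z}l(v_i)\le\lceil\log z\rceil\,P$ (the case $z=1$ being trivial). The strategy is then to split the index set $\{2,3,\dots,z\}$ into $\lceil\log z\rceil$ dyadic blocks and to show that the total $l$-mass of each block is at most $P$.

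The core step is the sub-claim: for every $k$ with $2\le k\le z$, one has $\sum_{i=\lceil k/2\rceil+1}^{k}l(v_i)\le P$. To prove it, let $T_k$ be an optimal tour on the prefix $\{v_1,\dots,v_k\}$ and let $t_k$ be its cost. Shortcutting an optimal tour of $S$ past all vertices outside the prefix, and applying the triangle inequality at each shortcut, produces a tour on $\{v_1,\dots,v_k\}$ of cost at most $2P$, hence $t_k\le 2P$. For the lower bound, orient each edge of the cycle $T_k$ from its endpoint with the smaller $l$-value to the endpoint with the larger $l$-value (ties broken by a fixed total order on the vertices). By condition~1 every edge has cost at least the $l$-value of its tail, so $t_k\ge\sum_{v}l(v)\,\mathrm{outdeg}(v)$, where $\mathrm{outdeg}(v)$ is the out-degree of $v$ in this orientation of $T_k$. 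Since $\mathrm{indeg}(v)+\mathrm{outdeg}(v)=2$ for each $v$ and $\sum_v\mathrm{indeg}(v)=k$, this rewrites as $t_k\ge 2\sum_v l(v)-\sum_v l(v)\,\mathrm{indeg}(v)$. The in-degrees are integers in $\{0,1,2\}$ summing to $k$, so by an elementary exchange (rearrangement) argument $\sum_v l(v)\,\mathrm{indeg}(v)$ is largest when the largest $l$-values carry in-degree $2$; it is therefore at most $2\sum_{i=1}^{\lfloor k/2\rfloor}l(v_i)$, plus $l(v_{\lceil k/2\rceil})$ when $k$ is odd. Substituting and using $l\ge 0$ yields $t_k\ge 2\sum_{i=\lceil k/2\rceil+1}^{k}l(v_i)$, and combining with $t_k\le 2P$ gives the sub-claim.

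To finish, I would iterate the sub-claim along the sequence $k_0=z$, $k_{j+1}=\lceil k_j/2\rceil$, stopping at the first index $m$ with $k_m=1$; one checks $m=\lceil\log z\rceil$. The blocks $B_j=\{k_{j+1}+1,\dots,k_j\}$ for $0\le j\le m-1$ are nonempty, pairwise disjoint, and partition $\{2,3,\dots,z\}$, and the sub-claim applied with $k=k_j$ says precisely $\sum_{i\in B_j}l(v_i)\le P$. Summing over the $m$ blocks gives $\sum_{i=2}^{z}l(v_i)\le m\,P=\lceil\log z\rceil\,P$, i.e. $\bigl(\sum_{v\in S}l(v)\bigr)-\max_{v\in S}l(v)\le\lceil\log z\rceil\,P$. (Condition~2, $l(v)\le P$, can serve as a cruder fallback for bounding the block carrying the smallest indices, but the orientation estimate already covers every block on its own.)

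The delicate point --- and the only place where the naive approach fails --- is obtaining the lower bound $t_k\ge 2\sum_{i>\lceil k/2\rceil}l(v_i)$ rather than the immediate $t_k\ge k\cdot l(v_k)$ that follows from the $k$ prefix vertices being pairwise at distance at least $l(v_k)$. That immediate bound only gives $l(v_k)\le 2P/k$, hence $\sum_{i=2}^{z}l(v_i)\le 2P\sum_{i=2}^{z}1/i$, which is of order $2\ln z\cdot P$ and would make each dyadic block cost $2P$ instead of $P$, i.e. a constant of roughly $2\lceil\log z\rceil$. The orientation-plus-exchange estimate is exactly what halves this, so I would be especially careful there, in particular with the odd-$k$ case and with the degenerate tour when $k=2$.
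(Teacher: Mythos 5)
Your proof is correct. Note, however, that the paper itself offers no proof of this lemma: it is imported verbatim from Imase and Waxman \cite{ImaseWaxman91} and used as a black box in the proof of Theorem~\ref{theo1}, so there is no in-paper argument to compare yours against. What you have written is a sound, self-contained reconstruction of the standard argument behind the $\lceil\log z\rceil$ bound: the shortcutting step (triangle inequality gives a tour of cost at most $2P$ on every prefix $\{v_1,\dots,v_k\}$ of the vertices sorted by decreasing $l$), the orientation/exchange estimate (each vertex of a cycle can be the cheaper endpoint of at most two incident edges, which yields $\sum_{i=\lceil k/2\rceil+1}^{k} l(v_i)\le P$ rather than the lossy $l(v_k)\le 2P/k$), and the dyadic telescoping $k_{j+1}=\lceil k_j/2\rceil$, which indeed terminates after exactly $\lceil\log_2 z\rceil$ steps and whose blocks partition $\{2,\dots,z\}$. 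Your handling of the odd-$k$ correction term and of the degenerate two-vertex tour is right, and your closing remark is accurate: condition~2 of the statement is never needed in your argument (it matters for how the lemma is applied, not for its proof), which does not affect correctness. You correctly identified the one genuinely delicate point, namely that the naive per-vertex bound only gives a constant of roughly $2\ln z$, and your fix is exactly the right one.
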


\begin{theorem}\label{theo1}
Let $G=(V,E,w)$ be a dynamic network, and let $S$ be a set of members. Algorithm \STT\/ is a superstabilizing algorithm that returns a steiner tree $T$ for $S$ satisfying
$\frac{W(T)}{W(T^*)} \leq \lceil \log |S| \rceil$,  where $T^*$ is an optimal Steiner tree for $S$.
\end{theorem}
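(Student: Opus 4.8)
The plan is to separate the two halves of the statement. The self-stabilizing part is already in hand: Lemma~\ref{lem:convergence} gives convergence and the closure lemma proved just after it gives closure; the superstabilizing part for the class $\Lambda$ is exactly Lemma~\ref{lem:2}. So the only genuinely new work is the approximation bound on the tree $T$ produced in a legitimate configuration, and for that the driving tool is Lemma~\ref{lem:ImaseWaxman91}, applied to the shortest‑distance metric $d$ of $G$ (which, being a metric closure, satisfies the triangle inequality).

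First I would fix the sequential structure of the tree. By Lemma~\ref{lem:correct4}, in a legitimate configuration there is an ordering $v_0,v_1,\dots,v_{z-1}$ of $S$ (with $v_0$ the root and $z=|S|$) such that each $v_i$ is attached to the tree $T_{i-1}$ constructed before its connection by a shortest path, of cost $d(v_i,T_{i-1})$, and $\{v_0,\dots,v_{i-1}\}\subseteq V_{T_{i-1}}$. Since a shortest path from a node to a set of nodes touches that set only at its endpoint, attaching $v_i$'s path adds exactly $d(v_i,T_{i-1})$ to the weight (the rearrangements driven by Rule~\CRF\ only keep member‑to‑connection‑point segments shortest and never increase the total weight), hence $W(T)\le \sum_{i=1}^{z-1} d(v_i,T_{i-1})$.

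Next I would instantiate Lemma~\ref{lem:ImaseWaxman91}. Let $2P$ be the cost of an optimal tour on $S$ in the metric $d$, and set $l(v_i):=d(v_i,T_{i-1})$ for $1\le i\le z-1$ and $l(v_0):=P$. Condition~(2) ($l(v)\le P$ for all $v\in S$): for $i\ge 1$, since $v_0\in T_{i-1}$ we have $l(v_i)\le d(v_i,v_0)$, and splitting the optimal tour into its two $v_i$–$v_0$ arcs $A_1,A_2$ gives $2d(v_i,v_0)\le |A_1|+|A_2| = 2P$, so $d(v_i,v_0)\le P$; and $l(v_0)=P$. Condition~(1) ($d(u,v)\ge \min(l(u),l(v))$ for distinct $u,v\in S$): if $0<i<j$ then $v_i\in T_{j-1}$, so $d(v_i,v_j)\ge d(v_j,T_{j-1})=l(v_j)\ge \min(l(v_i),l(v_j))$; and if $i=0<j$ then $d(v_0,v_j)\ge l(v_j)=\min(l(v_0),l(v_j))$ because $l(v_j)\le P=l(v_0)$. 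By condition~(2), $\max_{v\in S} l(v)=l(v_0)=P$, so Lemma~\ref{lem:ImaseWaxman91} yields
$$ W(T)\;\le\;\sum_{i=1}^{z-1} l(v_i)\;=\;\Bigl(\sum_{v\in S} l(v)\Bigr)-\max_{v\in S} l(v)\;\le\;\lceil\log z\rceil\,P. $$
Finally, doubling the edges of an optimal Steiner tree $T^*$ and shortcutting the resulting Euler tour produces a tour of $S$ of cost at most $2W(T^*)$, so $P\le W(T^*)$, and therefore $W(T)/W(T^*)\le\lceil\log z\rceil=\lceil\log|S|\rceil$ (the case $|S|\le 1$ being trivial). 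Combined with the self/superstabilization facts above, this proves that \STT\ is a superstabilizing $\lceil\log|S|\rceil$‑approximation.

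I expect the main obstacle to be the accounting in the second paragraph rather than the tour‑splitting arithmetic: rigorously justifying, in the asynchronous self‑stabilizing model with arbitrary initial states, that the tree reached in a legitimate configuration really is realizable by the "attach the nearest member along a shortest path" process underlying Lemma~\ref{lem:correct4}, and that the connection‑point maintenance of Rule~\CRF\ never increases $W(T)$, so that the inequality $W(T)\le\sum_i d(v_i,T_{i-1})$ is safe. Pinning down this invariant is the delicate interface between the distributed rules and the Imase–Waxman analysis; once it is granted, the rest is the standard argument sketched above.
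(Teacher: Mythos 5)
Your proposal is correct and follows essentially the same route as the paper: both reduce the bound to Lemma~\ref{lem:ImaseWaxman91} via an $l$-function built from the sequential shortest-path attachments guaranteed by Lemma~\ref{lem:correct4}, verify conditions (1) and (2), bound $P\le W(T^*)$ by doubling the optimal Steiner tree, and invoke the stabilization/superstabilization lemmas for the dynamic part. Your choices $l(v_i)=d(v_i,T_{i-1})$ and $l(v_0)=P$ are minor (and if anything cleaner) variants of the paper's $l(v_i)=\min_{0\le j<i}d(v_i,v_j)$ and $l(v_0)=\max_j d(v_0,v_j)$, and you make explicit the same accounting step ($W(T)\le\sum_i l(v_i)$) that the paper also takes from the Imase--Waxman analysis without further proof.
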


\begin{proof}
Let a set $S$ of members, and $z=|S|$. According to Lemmas \ref{lem:correct4} and \ref{lem:correct5}, when our algorithm completes each member $v \in S$ is connected in $T$ by a shortest path to a node $u$, such that $u$ has been connected in $T$ before $v$. Let $T_{i-1}$ the tree constructed by our algorithm before the connection of a member $v_i \in S$. As in \cite{ImaseWaxman91} (proof of theorem 2), if we let $l(v_i)= \min_{0 \leq j < i} d(v_i,v_j)$ for $1 \leq i \leq z$, then the cost of the path selected by the algorithm to connect $v_i$ to $T_{i-1}$ is less than or equal to $l(v_i)$. Let $l(v_0)=\max_{1 \leq j \leq z} d(v_0,v_j)$, so $l(v_0) \geq \max_{0 \leq j \leq i} l(v_j)$. Thus we have $W(T) \leq (\sum_{j=0}^{z} l(v_j))-l(v_0)$. Moreover for any pair of nodes $v_j,v_k$, according to definition of function $l$ we have $l(v_k) \leq d(v_j,v_k)$ so (1) of lemma \ref{lem:ImaseWaxman91} holds. Note that a tour of set $S$ can be constructed from a Steiner tree for $S$ of cost of $P$ such that the cost of the tour is no more than twice the cost of the Steiner tree. Since $l(v_j) \leq P$ for all $j$, $0 \leq j \leq z$, (2) of lemma \ref{lem:ImaseWaxman91} holds and according to lemma \ref{lem:ImaseWaxman91} the theorem follows.

Since $S$ is a dynamic set of member, we must consider two cases: the add of a member and the removal of a member. Consider the add of a new member $v$ to $S$. By Lemma \ref{lem:dyn_ajout}, the system reaches a legitimate configuration. Thus, $v$ is connected by a shortest path to the existing Steiner tree and $W(T) \leq (\sum_{j=0}^{z} l(v_j))-l(v_0)$ is still satisfied. The same argument is true for the add of a node or an edge of the network. Consider the removal of a member $v$ from $S$. By Lemma \ref{lem:dyn_suppression}, the system reaches a legitimate configuration. Thus, each member $v$ of $S$ is connected by a shortest path to a connected member in the Steiner tree and $W(T) \leq (\sum_{j=0}^{z} l(v_j))-l(v_0)$ is satisfied again. The same argument is true for the removal of a node or an edge of the network. Therefore, considering a dynamic network $G$ and a dynamic set of members the theorem is always satisfied.
\end{proof}

\section{Conclusion}

We propose a self-stabilizing algorithm for the Steiner tree problem, based on the heuristic proposed in \cite{ImaseWaxman91}, and achieves starting from any configuration a competitiveness of $log(z)$ in $O(zD)$ rounds with $z$ the number of members and $D$ the diameter of the network. Additionally, we show that our algorithm works for dynamic networks in which a fault may occur on a node or edge. Moreover, we prove that if a fault occurs in a legitimate configuration our algorithm is superstabilizing and is able to satisfy a "passage predicate" about the tree structure.

For future works, it will be interesting to design a self-stabilizing algorithm in dynamic networks for the Steiner tree problem, which achieves a constant competitiveness of 2. For example, by using the self-stabilizing algorithm proposed in \cite{Kakugawa_Steiner_journal} and extending it for dynamic networks or by using another heuristic.

\bibliographystyle{alpha}
\bibliography{SS_Steiner_Bib}

\end{document}